\documentclass[11pt,a4paper]{article}
\bibliographystyle{plainurl}
\usepackage{hyperref}
\usepackage{authblk}

\title{On Comparable Box Dimension}

\author[1]{Zdenek Dvor\'ak\thanks{Supported by the ERC-CZ project LL2005 (Algorithms and complexity within and beyond bounded expansion) of the Ministry of Education of Czech Republic. \texttt{rakdver@iuuk.mff.cuni.cz}}}
\author[2]{Daniel Goncalves\thanks{Supported by the ANR grant GATO ANR-16-CE40-0009. \texttt{goncalves@lirmm.fr}}}
\author[1]{Abhiruk Lahiri\thanks{Supported by the ERC-CZ project LL2005 (Algorithms and complexity within and beyond bounded expansion) of the Ministry of Education of Czech Republic. \texttt{abhiruk@iuuk.mff.cuni.cz}}}
\author[3]{Jane Tan\thanks{\texttt{jane.tan@maths.ox.ac.uk}}}
\author[4]{Torsten Ueckerdt\thanks{\texttt{torsten.ueckerdt@kit.edu}}}
\affil[1]{Charles University, Prague, Czech Republic}
\affil[2]{LIRMM, Université de Montpellier, CNRS, Montpellier, France}
\affil[3]{Mathematical Institute, University of Oxford, Oxford OX2 6GG, United Kingdom}
\affil[4]{Karlsruhe Institute of Technology, Karlsruhe, Germany}

\usepackage{amsthm}
\usepackage{amsfonts}
\usepackage{amsmath}
\usepackage{amssymb}
\usepackage{colonequals}
\usepackage{epsfig}
\usepackage{url}
\newcommand{\GG}{{\cal G}}
\newcommand{\HH}{{\cal H}}

\newcommand{\OO}{{\cal O}}

\newcommand{\brm}[1]{\operatorname{#1}}
\newcommand{\cbdim}{\brm{dim}_{cb}}
\newcommand{\ecbdim}{\brm{dim}^{ext}_{cb}}
\newcommand{\tw}{\brm{tw}}
\newcommand{\vol}{\brm{vol}}

\newtheorem{theorem}{Theorem}
\newtheorem{lemma}{Lemma}
\newtheorem{corollary}{Corollary}
\newtheorem{definition}{Definition}

\begin{document}
\maketitle

\begin{abstract}
Two boxes in $\mathbb{R}^d$ are \emph{comparable} if one of them is a subset
of a translation of the other one. The \emph{comparable box dimension} of a graph
$G$ is the minimum integer $d$ such that $G$ can be represented as a
touching graph of comparable axis-aligned boxes in $\mathbb{R}^d$. We
show that proper minor-closed classes have bounded comparable box
dimension and explore further properties of this notion.
\end{abstract}

\section{Introduction}

Given a system $\OO$ of subsets of $\mathbb{R}^d$, we say that a graph $G$ is a \emph{touching graph of objects from $\OO$}
if there exists a function $f:V(G)\to \OO$ (called a \emph{touching representation by objects from $\OO$})
such that the interiors of $f(u)$ and $f(v)$ are disjoint for all distinct $u,v\in V(G)$, and $f(u)\cap f(v)\neq\emptyset$ if and only if $uv\in E(G)$.
Famously, Koebe~\cite{koebe} proved that a graph is planar if and only if it is a touching graph of balls in $\mathbb{R}^2$.
This result has motivated numerous strengthenings and variations (see \cite{graphsandgeom, sachs94} for some classical examples); most relevantly for us, Felsner and Francis~\cite{felsner2011contact} showed that every planar graph is a touching graph of cubes in $\mathbb{R}^3$.

An attractive feature of touching representations is that it is possible to represent graph classes that are sparse
(e.g., planar graphs, or more generally, graph classes with bounded expansion~\cite{nesbook}).
This is in contrast to general intersection representations where the represented class always includes arbitrarily large cliques.
Of course, whether the class of touching graphs of objects from $\OO$ is sparse or not depends on the system $\OO$.
For example, all complete bipartite graphs $K_{n,m}$ are touching graphs of boxes in $\mathbb{R}^3$, where the vertices in
one part are represented by $m\times 1\times 1$ boxes and the vertices of the other part are represented by $1\times n\times 1$
boxes (throughout the paper, by a \emph{box} we mean an axis-aligned one, i.e., the Cartesian product of closed intervals of non-zero length).
Dvo\v{r}\'ak, McCarty and Norin~\cite{subconvex} noticed that this issue disappears if we forbid such a combination of
long and wide boxes, a condition which can be expressed as follows. For two boxes $B_1$ and $B_2$, we write $B_1 \sqsubseteq B_2$ if $B_2$ contains a translate of $B_1$.
We say that $B_1$ and $B_2$ are \emph{comparable} if $B_1\sqsubseteq B_2$ or $B_2\sqsubseteq B_1$.
A \emph{touching representation by comparable boxes} of a graph $G$ is a touching representation $f$ by boxes
such that for every $u,v\in V(G)$, the boxes $f(u)$ and $f(v)$ are comparable. 
Let the \emph{comparable box dimension} $\cbdim(G)$ of a graph $G$ be the smallest integer $d$ such that $G$ has a touching representation by comparable boxes in $\mathbb{R}^d$.
Let us remark that the comparable box dimension of every graph $G$ is at most $|V(G)|$, see Section~\ref{sec-vertad} for details.
Then for a class $\GG$ of graphs, let $\cbdim(\GG)\colonequals\sup\{\cbdim(G):G\in\GG\}$. Note that $\cbdim(\GG)=\infty$ if the
comparable box dimension of graphs in $\GG$ is not bounded.

Dvo\v{r}\'ak, McCarty and Norin~\cite{subconvex} proved some basic properties of this notion.  In particular,
they showed that if a class $\GG$ has finite comparable box dimension, then it has polynomial strong coloring
numbers, which implies that $\GG$ has strongly sublinear separators.  They also provided an example showing
that for many functions $h$, the class of graphs with strong coloring numbers bounded by $h$ has infinite
comparable box dimension\footnote{In their construction $h(r)$ has to be at least 3, and has to tend to $+\infty$.}. Dvo\v{r}\'ak et al.~\cite{wcolig}
proved that graphs of comparable box dimension $3$ have exponential weak coloring numbers, giving the
first natural graph class with polynomial strong coloring numbers and superpolynomial weak coloring numbers
(the previous example is obtained by subdividing edges of every graph suitably many times~\cite{covcol}).

We show that the comparable box dimension behaves well under the operations of addition of apex vertices,
clique-sums, and taking subgraphs.  Together with known results on product structure~\cite{DJM+}, this implies
the main result of this paper.

\begin{theorem}\label{thm-minor}
The comparable box dimension of every proper minor-closed class of graphs is finite.
\end{theorem}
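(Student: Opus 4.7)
My plan is to combine the three closure properties advertised in the paragraph preceding the statement --- closure of bounded comparable box dimension under taking subgraphs, under adding a bounded number of apex vertices, and under clique-sums --- with the product structure theorem for proper minor-closed classes (the reference cited just before the statement). That structure theorem says, roughly, that for every proper minor-closed class $\GG$ there exist integers $k,a$ such that every $G\in\GG$ is a subgraph of a clique-sum (over bounded cliques) of graphs obtained from subgraphs of $H\boxtimes P$ by adding at most $a$ apex vertices, where $\tw(H)\le k$ and $P$ is a path. Thus it suffices to bound $\cbdim$ on the building blocks $H\boxtimes P$ and then propagate through the three operations.

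The core of the proof therefore reduces to two lemmas. The first (the base case) asserts that graphs of treewidth at most $k$ have comparable box dimension bounded by some function of $k$. I would prove this inductively along a tree decomposition, assigning to each bag a small cube and nesting descendant cubes so that adjacency is realised exactly by the touchings prescribed by the decomposition. The second (the lifting step) asserts that taking the strong product with a path increases $\cbdim$ by at most a constant, ideally $1$. The natural attempt is to introduce a new coordinate axis encoding the position along $P$ and to slice the given representation of $H$ into congruent copies stacked along this axis, taking care to let long path-direction boxes touch across slices while remaining comparable to the thin $H$-direction boxes inherited from the representation of $H$.

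Once these two lemmas are in hand, the theorem is a short assembly. Subgraphs cost nothing because we can simply restrict the representing function $f$. Adding at most $a$ apex vertices increases $\cbdim$ by $O(a)$ by the apex closure property, and clique-sums over bounded cliques preserve finite $\cbdim$ by the clique-sum closure property. Applying these operations to the bounded-dimension building blocks $H\boxtimes P$ bounds $\cbdim(\GG)$ by a function of $k$ and $a$ alone, which depend only on the class $\GG$.

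The main obstacle I expect is the strong-product lifting lemma. In any touching representation of $H$ the boxes are already of varied shapes, and the lifted representation of $H\boxtimes P$ must keep all of them (together with their copies in every slice and the edges between consecutive slices) mutually comparable even after the new coordinate is introduced. This will likely force the base-case treewidth construction to produce boxes of \emph{bounded aspect ratio} (for instance, cubes), so that the extra coordinate can be rescaled to absorb any residual mismatches. The clique-sum closure lemma will similarly demand enough freedom to rescale and align the summands along their shared clique, so I expect that all three ingredients will need to be proved in a ``normalised'' form that preserves comparability under affine rescalings, rather than in their most general statements.
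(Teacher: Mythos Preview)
Your overall strategy matches the paper's: bound $\cbdim$ for bounded-treewidth graphs, lift through the strong product with a path, add apices, glue via clique-sums, and finally pass to subgraphs. The paper carries this out with Theorem~\ref{thm-ktree}, Lemma~\ref{lemma-sp}, Lemma~\ref{lemma-apex}, and Corollary~\ref{cor-csump}, exactly in that order.

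There is, however, a genuine gap in your assembly step. You write ``Subgraphs cost nothing because we can simply restrict the representing function $f$.'' That is true only for \emph{induced} subgraphs. Deleting an edge while keeping both endpoints requires separating two touching boxes without breaking any other contact, and in general this is impossible without increasing the dimension: the paper exhibits the example $K_{2^d}$ minus a perfect matching, which has $\cbdim=2^{d-1}$ whereas $\cbdim(K_{2^d})=d$. Since the product structure theorem only guarantees that $G$ is a (not necessarily induced) subgraph of the clique-sum, and since clique-sums themselves may delete edges of the shared clique, you cannot avoid this step. The paper handles it with Lemma~\ref{lemma-subg}, which uses a star colouring of the host graph to add $\binom{\chi_s}{2}$ extra coordinates separating the unwanted contacts; this is the source of the exponential $1250^{2k+2}$ in the final bound.

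A second, smaller point: your anticipated obstacle for the strong-product lemma is real, and the paper resolves it precisely along the lines you guess. Lemma~\ref{lemma-sp} requires the second factor to be represented by translates of a single hypercube (which a path trivially is in $\mathbb{R}^1$), so that the product boxes inherit comparability directly from the first factor. You do not need bounded aspect ratio in the treewidth representation for this; the unit-cube hypothesis on $P$ alone suffices.
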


Additionally, we show that classes of graphs with finite comparable box dimension are fractionally treewidth-fragile.
This gives arbitrarily precise approximation algorithms for all monotone maximization problems that are
expressible in terms of distances between the solution vertices and tractable on graphs of bounded treewidth~\cite{distapx}
or expressible in the first-order logic~\cite{logapx}.

\section{Parameters}

In this section we bound some basic graph parameters in terms of comparable box dimension.
The first result bounds the clique number $\omega(G)$ in terms of $\cbdim(G)$.
\begin{lemma}\label{lemma-cliq}
For any graph $G$, we have $\omega(G)\le 2^{\cbdim(G)}$.
\end{lemma}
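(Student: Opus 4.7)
The plan is to use a Helly-type property for axis-aligned boxes together with an orthant-counting argument. I will find a single point $p$ contained in every box of the clique, and then map each box to a sign vector in $\{+,-\}^d$ indicating which orthant of $p$ the box's center lies in. Since there are only $2^d$ such sign vectors, it suffices to show this map is injective.

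First, I would fix a clique $K\subseteq V(G)$ and a touching representation $f$ of $G$ by comparable boxes in $\mathbb{R}^d$ with $d=\cbdim(G)$. The boxes $\{f(v):v\in K\}$ pairwise intersect, so projecting them to any coordinate axis yields pairwise-intersecting closed intervals on $\mathbb{R}$, which share a common point (take the maximum of the left endpoints). Doing this coordinate by coordinate produces a point $p\in\bigcap_{v\in K}f(v)$.

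Writing $f(v)=\prod_{i=1}^d [a_i^v, b_i^v]$ with center $c_i^v=(a_i^v+b_i^v)/2$, I would define $\sigma(v)\in\{+,-\}^d$ by $\sigma_i(v)=+$ iff $c_i^v\ge p_i$. To prove injectivity, I would suppose $\sigma(u)=\sigma(v)$ for distinct $u,v\in K$ and, after reflecting coordinates about the hyperplanes $x_i=p_i$ (a symmetry of the whole configuration that flips $\sigma_i$), assume all signs are $+$. Because $f(u)$ and $f(v)$ have disjoint interiors but nonempty intersection, some coordinate $i$ must satisfy $\max(a_i^u,a_i^v)=\min(b_i^u,b_i^v)$; since $p_i$ lies in both intervals, this common value must equal $p_i$. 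After swapping $u,v$ if needed, this forces $b_i^u=p_i$, and combined with $c_i^u\ge p_i$ together with $a_i^u\le p_i$ (because $p\in f(u)$) it yields $a_i^u=b_i^u=p_i$, contradicting the assumption that $f(u)$ has nonzero side lengths.

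The subtle point is the tie-breaking when $c_i^v=p_i$: these must be placed on the $+$ side (i.e., use $\ge$ rather than $>$) so that the cascade of equalities in the injectivity argument collapses to a degenerate box and produces the contradiction. I note that comparability is not explicitly used in this approach; the bound $\omega(G)\le 2^d$ in fact holds for any touching representation by axis-aligned boxes in $\mathbb{R}^d$. If the authors prefer an argument that leverages comparability directly, one could instead pick the smallest box $B_0\in K$ (which exists by the $\sqsubseteq$-order) and try to assign each other box of $K$ to a corner of $B_0$ based on where it touches; I would keep this as a backup in case the Helly-based argument revealed a hidden issue.
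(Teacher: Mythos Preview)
Your proof is correct and follows essentially the same approach as the paper: both find a common point $p$ via the Helly property of axis-aligned boxes and then bound the clique size by the $2^d$ orthants at $p$, using that boxes in a touching representation have pairwise disjoint interiors. The only cosmetic difference is that the paper directly observes that each box's interior meets at least one open orthant at $p$ (so two boxes sharing an orthant would have overlapping interiors near $p$), whereas you make the orthant assignment explicit via the sign vector of the center and handle ties carefully; both arguments are valid, and your observation that comparability is not actually needed here is correct.
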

\begin{proof}
We may assume that $G$ has bounded comparable box dimension
witnessed by a representation $f$. To represent any clique $A = \{a_1,\ldots,a_w\}$ in $G$, the
corresponding boxes $f(a_1),\ldots,f(a_w)$ have pairwise non-empty
intersections.  Since axis-aligned boxes have the Helly property, there
is a point $p \in \mathbb{R}^d$ contained in $f(a_1) \cap \cdots \cap
f(a_w)$.  As each box is full-dimensional, its interior intersects at
least one of the $2^d$ orthants at $p$. At the same time, it follows from the definition
of a touching representation that $f(a_1),\ldots,f(a_d)$ have pairwise disjoint
interiors, and hence $w \leq 2^d$.
\end{proof}
Note that a clique with $2^d$ vertices has a touching representation by comparable boxes in $\mathbb{R}^d$,
where each vertex is a hypercube defined as the Cartesian product of intervals of form $[-1,0]$ or $[0,1]$.
Together with Lemma~\ref{lemma-cliq}, it follows that $\cbdim(K_{2^d})=d$.

In the following we consider the chromatic number $\chi(G)$, and two
of its variants.  An \emph{acyclic coloring} (resp. \emph{star coloring}) of a graph $G$ is a proper
coloring such that any two color classes induce a forest (resp. star forest, i.e., a forest in which each component is a star).  The \emph{acyclic chromatic number} $\chi_a(G)$ (resp. \emph{star chromatic
  number} $\chi_s(G)$) of $G$ is the minimum number of colors in an acyclic (resp. star)
coloring of $G$.  We will need the fact that all the variants of the chromatic number
are at most exponential in the comparable box dimension; this follows
from~\cite{subconvex}, although we include an argument to make the
dependence clear.
\begin{lemma}\label{lemma-chrom}
For any graph $G$ we have $\chi(G)\le 3^{\cbdim(G)}$, $\chi_a(G)\le 5^{\cbdim(G)}$ and $\chi_s(G) \le 2\cdot 9^{\cbdim(G)}$.
\end{lemma}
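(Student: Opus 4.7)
The plan is to derive all three bounds from a single structural lemma about touching representations by comparable boxes, followed by greedy coloring arguments of increasing delicacy.

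\emph{Structural lemma (degeneracy).} Let $d=\cbdim(G)$ and fix a touching representation $f$ of $G$ by comparable boxes in $\mathbb{R}^d$. Choose $v\in V(G)$ minimizing $\vol(f(v))$ and set $B=f(v)$. Every neighbor $u$ of $v$ satisfies $B\sqsubseteq f(u)$, so we assign $u$ a signature $\sigma(u)\in\{-1,0,+1\}^d$ in which, for each coordinate $i$, $\sigma_i(u)=-1$ (resp.\ $+1$) indicates that the $i$-th intervals of $B$ and $f(u)$ meet only at the low (resp.\ high) endpoint of $B$'s $i$-th interval, and $\sigma_i(u)=0$ indicates that the $i$-th interval of $B$ is contained in that of $f(u)$. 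Since $B$ and $f(u)$ have disjoint interiors, $\sigma(u)\neq\mathbf{0}$. Two neighbors with the same signature would themselves overlap in interior (in each coordinate the projections share an interval of positive length), contradicting the representation. Hence $v$ has at most $3^d-1$ neighbors, and iteratively deleting such minimum-volume boxes yields an ordering $\prec$ of $V(G)$ in which every vertex has at most $3^d-1$ later neighbors.

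\emph{Step 2 (chromatic number).} Greedy coloring in the reverse of $\prec$: each vertex sees at most $3^d-1$ already-colored neighbors, so $\chi(G)\le 3^d$.

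\emph{Steps 3 and 4 (acyclic and star).} We continue the greedy approach but with enlarged palettes. For $\chi_a\le 5^d$, when coloring $v$ we additionally forbid any color that would complete a bichromatic 4-cycle through $v$; this requires bounding, for each pair of already-colored neighbors of $v$ with equal color, the number of colors appearing on a common later-colored neighbor. For $\chi_s\le 2\cdot 9^d$, we moreover forbid colors that would complete a bichromatic $P_4$ through $v$. The key observation is that the signature argument of the structural lemma applies not merely to direct neighbors but also to walks of length 2 or 3 from $v$ through later vertices, so the numbers of such walks are controlled by products of signature cardinalities; tracking these counts carefully gives the bounds $5^d-1$ and $2\cdot 9^d-1$ on the total forbidden colors in the respective cases.

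\emph{Main obstacle.} The structural lemma is elementary but essential; the real difficulty is in the $\chi_a$ and $\chi_s$ steps because plain $(3^d-1)$-degeneracy does \emph{not} bound $\chi_a$ or $\chi_s$ in general. One must leverage the geometric signature structure at every scale (equivalently, at every stage of the ordering $\prec$) to obtain a bound on the number of short walks through later vertices, i.e.\ on strong/weak $r$-coloring numbers. Matching the stated constants $5^d$ and $2\cdot 9^d$ is the main technical work; the underlying exponential bounds follow from the framework of \cite{subconvex}, which this lemma reprises in order to exhibit explicit constants.
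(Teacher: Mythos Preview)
Your structural lemma has a genuine gap. The trichotomy you define for $\sigma_i(u)$ is not exhaustive: with $B[i]=[0,10]$ and $f(u)[i]=[5,15]$ (which is consistent with $B\sqsubseteq f(u)$ and $B\cap f(u)\neq\emptyset$), the intersection $[5,10]$ is neither a single endpoint of $B[i]$ nor does $f(u)[i]$ contain $B[i]$, so $\sigma_i(u)$ is undefined. Even if you repair the definition (say, $\sigma_i(u)=0$ whenever the open $i$-th intervals overlap), the key claim ``two neighbors with the same signature overlap in interior'' is false. In $\mathbb{R}^2$ take $B=[0,10]^2$, $f(u)=[-5,5]\times[10,20]$, $f(u')=[6,16]\times[10,20]$: both are valid neighbors of $v$ with $B\sqsubseteq f(u),f(u')$, both receive signature $(0,+1)$, yet $f(u)\cap f(u')=\emptyset$. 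So the signature argument does not yield the degeneracy bound as written. The bound $3^d-1$ is correct, but the paper obtains it by a volume argument: scale $f(v_i)$ by a factor~$3$ about its center; each earlier neighbor $v_j$ (with $f(v_i)\sqsubseteq f(v_j)$) contains a translate of $f(v_i)$ inside this scaled box, and these translates have pairwise disjoint interiors.

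Steps~3 and~4 are not yet proofs. You assert that the signature mechanism ``applies to walks of length 2 or 3 through later vertices'' and that ``tracking these counts carefully'' produces $5^d$ and $2\cdot 9^d$, but you neither define a signature along such walks nor carry out the count; given that the signature argument already fails for direct neighbors, there is no reason to expect it to extend. The paper's approach is different and concrete: order vertices by decreasing volume, and when greedily coloring $v_i$ forbid the color of any $v_j$ with $j<i$ reachable from $v_i$ by a path $v_iv_m v_j$ (this forces a star coloring). For the count, if $m>i$ then $f(v_m)\sqsubseteq f(v_i)\sqsubseteq f(v_j)$, and one finds a translate of $f(v_i)$ inside $f(v_j)$ and inside the box obtained by scaling $f(v_i)$ by~$5$; disjointness of these translates gives at most $5^d-1$ such $v_j$. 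Direct earlier neighbors number at most $3^d-1$ by the factor-$3$ argument, and the case $j<m<i$ contributes at most $(3^d-1)^2$. Summing yields at most $2\cdot 9^d-1$ forbidden colors. The acyclic bound $5^d$ is obtained analogously. None of this uses signatures; the scaling/volume idea is what makes the walk counts work.
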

\begin{proof}
We focus on the star chromatic number and note that the chromatic number and the acyclic chromatic number may be bounded similarly.
Suppose that $G$ has comparable box dimension $d$ witnessed by a representation $f$, and let $v_1, \ldots, v_n$
be the vertices of $G$ written so that $\vol(f(v_1)) \geq \ldots \geq \vol(f(v_n))$.
Equivalently, we have $f(v_i)\sqsubseteq f(v_j)$ whenever $i>j$. Now define a greedy coloring $c$ so that $c(v_i)$ is
the smallest color such that $c(v_i)\neq c(v_j)$ for any $j<i$ for which either $v_jv_i\in E(G)$ or there
exists $m>j$ such that $v_jv_m,v_mv_i\in E(G)$. Note that this gives a star coloring, since a path on four vertices always contains a 3-vertex subpath of the form $v_{i_1}v_{i_2}v_{i_3}$ such that $i_1<i_2,i_3$ and our coloring procedure gives distinct colors to vertices forming such a path.

It remains to bound the number of colors used. Suppose we are coloring $v_i$. We shall bound the number of vertices
$v_j$ such that $j<i$ and such that there exists $m>i$ for which $v_jv_m,v_mv_i\in E(G)$. Let $B$ be the box obtained by scaling up $f(v_i)$ by a factor of 5 while keeping the same center. Since $f(v_m)\sqsubseteq f(v_i)\sqsubseteq f(v_j)$, there exists a translation $B_j$ of $f(v_i)$
contained in $f(v_j)\cap B$ (see Figure~\ref{fig:lowercolcount}). Two boxes $B_{j}$ and $B_{j'}$ for $j\neq j'$ have disjoint interiors since their intersection is contained in the intersection of the touching boxes $f(v_{j})$ and $f(v_{j'})$, and their interiors are also disjoint from $f(v_i)\subset B$. Thus the number of such indices $j$ is at most $\vol(B)/\vol(f(v_i))-1=5^d-1$.

A similar argument shows that the number of indices $m$ such that $m<i$ and $v_mv_i\in E(G)$ is at most $3^d-1$.
Consequently, the number of indices $j<i$ for which there exists $m$ such that $j<m<i$ and $v_jv_m,v_mv_i\in E(G)$
is at most $(3^d-1)^2$. This means that when choosing the color of $v_i$ greedily, we only need to avoid colors of at most $(5^d-1) + (3^d-1) + (3^d-1)^2$ vertices, so $2\cdot 9^d$ colors suffice.
\end{proof}

\begin{figure}
\centering
\includegraphics[scale=1]{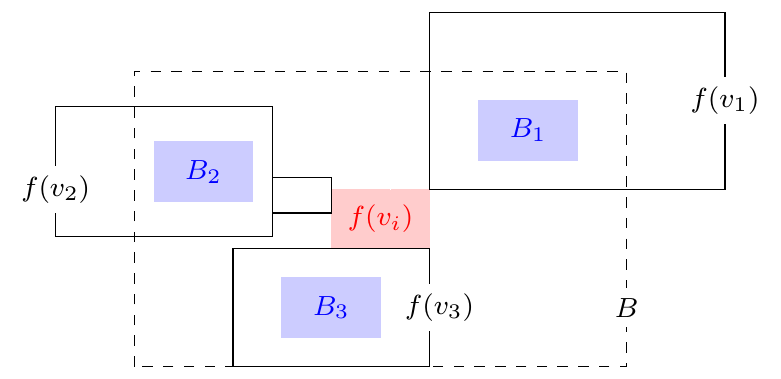}
\caption{Nearby boxes obstructing colors at $v_i$.}
\label{fig:lowercolcount}
\end{figure}

\section{Operations}

It is clear that given a touching representation of a graph $G$, one
can easily obtains a touching representation by boxes of an induced
subgraph $H$ of $G$ by simply deleting the boxes corresponding to the
vertices in $V(G)\setminus V(H)$.  In this section we are going to
consider other basic operations on graphs. In the following, to describe
the boxes, we are going to use the Cartesian product $\times$ defined among boxes ($A\times B$ is the box whose projection on the first dimensions gives the box $A$, while the projection on the remaining dimensions gives the box $B$) or we are going to provide its projections for every dimension ($A[i]$ is the interval obtained from projecting $A$ on its $i^\text{th}$ dimension).

\subsection{Vertex addition}\label{sec-vertad}

Let us start with a simple lemma saying that the addition of a vertex
increases the comparable box dimension by at most one.  In particular,
this implies that $\cbdim(G)\le |V(G)|$.
\begin{lemma}\label{lemma-apex}
For any graph $G$ and $v\in V(G)$, we have $\cbdim(G)\le \cbdim(G-v)+1$.
\end{lemma}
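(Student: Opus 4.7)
The plan is to extend a given touching representation $f$ of $G-v$ by comparable boxes in $\mathbb{R}^d$ (with $d=\cbdim(G-v)$) to a touching representation $f'$ of $G$ by comparable boxes in $\mathbb{R}^{d+1}$, using the extra coordinate both to distinguish neighbors of $v$ from non-neighbors and to carry a ``shelf'' representing $v$ itself. Fix some $\eps\in(0,1)$ and a box $B_v\subset\mathbb{R}^d$ large enough to contain $f(u)$ for every $u\in V(G-v)$. For each $u\in V(G-v)$ I would define $f'(u)\colonequals f(u)\times J_u$, where $J_u=[0,1]$ if $u\in N(v)$ and $J_u=[\eps,1+\eps]$ otherwise; and I would set $f'(v)\colonequals B_v\times[-1,0]$.

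Next I would verify the adjacencies. For $u,u'\in V(G-v)$, the intervals $J_u$ and $J_{u'}$ always meet in a nondegenerate interval, so $f'(u)\cap f'(u')\neq\emptyset$ iff $f(u)\cap f(u')\neq\emptyset$, recovering the edges and non-edges of $G-v$; the interior of the intersection is empty whenever that of $f(u)\cap f(u')$ is. For the pair $(f'(u),f'(v))$ with $u\in V(G-v)$, the $\mathbb{R}^d$-projections always meet since $f(u)\subseteq B_v$, while the last-coordinate intervals $J_u$ and $[-1,0]$ meet precisely when $0\in J_u$, i.e.\ exactly when $u\in N(v)$, and then only in the single point $\{0\}$, giving empty interior.

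For comparability, the key observation is that $[0,1]$, $[\eps,1+\eps]$, and $[-1,0]$ all have length exactly $1$, so any two of them are comparable (each is a translate of the other); combined with comparability of the $\mathbb{R}^d$-projections (inherited from $f$ between two vertices of $G-v$, and coming from $f(u)\subseteq B_v$ in the pair $(f'(u),f'(v))$), this yields pairwise comparability of all boxes in $f'$. The main thing to get right is designing the last-coordinate intervals so that they simultaneously (i) pairwise intersect within $V(G-v)$ to preserve old adjacencies, (ii) meet $[-1,0]$ on precisely $N(v)$, and (iii) remain comparable in the extended representation; the uniform length $1$ together with the $[0,1]$-versus-$[\eps,1+\eps]$ dichotomy makes all three work at once.
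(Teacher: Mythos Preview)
Your proof is correct and follows essentially the same approach as the paper: add one extra coordinate, give every old vertex a unit-length interval there (shifted according to adjacency with $v$), and represent $v$ by a large box times a unit interval touching only the neighbors' intervals. The only differences are cosmetic---the paper puts the new coordinate first rather than last, uses the specific shift $[1/2,3/2]$ instead of your $[\eps,1+\eps]$, and uses the explicit cube $[-M,M]^d$ instead of a generic containing box $B_v$---and you supply more detailed verification of the touching, disjoint-interior, and comparability conditions than the paper does.
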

\begin{proof}
Let $f$ be a touching representation of $G-v$ by comparable boxes in $\mathbb{R}^d$, where $d=\cbdim(G-v)$.
We define a representation $h$ of $G$ as follows.
For each $u\in V(G)\setminus\{v\}$, let $h(u)=[0,1]\times f(u)$ if $uv\in E(G)$ and 
$h(u)=[1/2,3/2]\times f(u)$ if $uv\not\in E(G)$.  Let $h(v)=[-1,0]\times [-M,M] \times \cdots \times [-M,M]$,
where $M$ is chosen large enough so that $f(u)\subseteq [-M,M] \times \cdots \times [-M,M]$ for every $u\in V(G)\setminus\{v\}$.
Then $h$ is a touching representation of $G$ by comparable boxes in $\mathbb{R}^{d+1}$.
\end{proof}

\subsection{Strong product}

Let $G\boxtimes H$ denote the \emph{strong product} of the graphs $G$
and $H$, i.e., the graph with vertex set $V(G)\times V(H)$ and with
distinct vertices $(u_1,v_1)$ and $(u_2,v_2)$ adjacent if and only if
$u_1$ is equal to or adjacent to $u_2$ in $G$
and $v_1$ is equal to or adjacent to $v_2$ in $H$.
To obtain a touching representation of $G\boxtimes
H$ it suffices to take a product of representations of $G$ and $H$, but
the resulting representation may contain incomparable boxes. 
Indeed, $\cbdim(G\boxtimes H)$ in general is not bounded by a function
of $\cbdim(G)$ and $\cbdim(H)$; for example, every star has comparable box dimension
at most two, but the strong product of the star $K_{1,n}$ with itself contains
$K_{n,n}$ as an induced subgraph, and thus its comparable box dimension is at least $\Omega(\log n)$.
However, as shown in the following lemma, this issue does not arise if the representation of $H$ consists of translates
of a single box; by scaling, we can without loss of generality assume this box is a unit hypercube.

\begin{lemma}\label{lemma-sp}
  Consider a graph $H$ having a touching representation $h$ in
  $\mathbb{R}^{d_H}$ by axis-aligned hypercubes of unit size.  Then for any graph
  $G$, the strong product $G\boxtimes H$ of these graphs has comparable box dimension at most
  $\cbdim(G) + d_H$.
\end{lemma}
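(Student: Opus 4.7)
The plan is to use the obvious product construction. Let $f$ be a touching representation of $G$ by comparable boxes in $\mathbb{R}^{\cbdim(G)}$, and define a map $g$ on $V(G\boxtimes H)$ into boxes of $\mathbb{R}^{\cbdim(G)+d_H}$ by $g((u,v)) = f(u)\times h(v)$. I would then verify that $g$ is a touching representation of $G\boxtimes H$ by comparable boxes, which immediately gives the claimed bound on $\cbdim(G\boxtimes H)$.

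The key observation driving the argument is that since every box in $h$ is a unit hypercube, any two such boxes are translates of one another and so are mutually $\sqsubseteq$-related. Combined with the comparability of the boxes $f(u)$ and the trivial fact that Cartesian product preserves $\sqsubseteq$ coordinate-wise (if $f(u_1)+t\subseteq f(u_2)$ and $h(v_1)+s\subseteq h(v_2)$, then $g((u_1,v_1))+(t,s)\subseteq g((u_2,v_2))$), this yields pairwise comparability of all boxes $g((u,v))$.

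For the touching property, the factorization
\[
g((u_1,v_1))\cap g((u_2,v_2)) = (f(u_1)\cap f(u_2))\times(h(v_1)\cap h(v_2))
\]
shows that the intersection is nonempty precisely when both factors are nonempty, i.e., when ($u_1=u_2$ or $u_1u_2\in E(G)$) and ($v_1=v_2$ or $v_1v_2\in E(H)$). For distinct vertices of $G\boxtimes H$, this is exactly the adjacency condition of the strong product. For disjoint interiors of distinct boxes, one observes that $\operatorname{int}(g((u,v))) = \operatorname{int}(f(u))\times \operatorname{int}(h(v))$, and if $(u_1,v_1)\neq(u_2,v_2)$ then the coordinate in which they differ already forces disjoint interiors in the corresponding factor (since $f$ and $h$ are themselves touching representations).

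There is no substantial obstacle here: the proof is a direct verification, and the hypothesis that $h$ uses only translates of a single shape is precisely what rules out the incomparability phenomenon exhibited by $K_{1,n}\boxtimes K_{1,n}$ in the discussion preceding the lemma.
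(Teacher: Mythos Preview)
Your proposal is correct and follows essentially the same approach as the paper: both take the Cartesian product $g((u,v))=f(u)\times h(v)$ and verify comparability (using that all $h(v)$ are translates of one another), disjointness of interiors, and the correct touching pattern. The only difference is cosmetic---you exploit the factorizations $A\times B\cap A'\times B'=(A\cap A')\times(B\cap B')$ and $\operatorname{int}(A\times B)=\operatorname{int}(A)\times\operatorname{int}(B)$ directly, whereas the paper argues coordinate by coordinate.
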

\begin{proof}
  The proof simply consists in taking a product of the two
  representations.  Indeed, consider a touching respresentation $g$ of $G$ by 
  comparable boxes in $\mathbb{R}^{d_G}$, with
  $d_G=\cbdim(G)$, and the representation $h$ of $H$.  Let us define a
  representation $f$ of $G\boxtimes H$ in $\mathbb{R}^{d_G+d_H}$ as
  follows.
  \[f((u,v))[i]=\begin{cases}
  g(u)[i]&\text{ if $i\le d_G$}\\
  h(v)[i-d_G]&\text{ if $i > d_G$}
  \end{cases}\]
  Consider distinct vertices $(u,v)$ and $(u',v')$ of $G\boxtimes H$.
  The boxes $g(u)$ and $g(u')$ are comparable, say $g(u)\sqsubseteq g(u')$.  Since $h(v')$
  is a translation of $h(v)$, this implies that $f((u,v))\sqsubseteq f((u',v'))$. Hence, the boxes
  of the representation $f$ are pairwise comparable.

  The boxes of the representations $g$ and $h$ have pairwise disjoint interiors.
  Hence, if $u\neq u'$, then there exists $i\le d_G$ such that the interiors
  of the intervals $f((u,v))[i]=g(u)[i]$ and $f((u',v'))[i]=g(u')[i]$ are disjoint;
  and if $v\neq v'$, then there exists $i\le d_H$ such that the interiors
  of the intervals $f((u,v))[i+d_G]=h(v)[i]$ and $f((u',v'))[i+d_G]=h(v')[i]$ are disjoint.
  Consequently, the interiors of boxes $f((u,v))$ and $f((u',v'))$ are pairwise disjoint.
  Moreover, if $u\neq u'$ and $uu'\not\in E(G)$, or if $v\neq v'$ and $vv'\not\in E(G)$,
  then the intervals discussed above (not just their interiors) are disjoint for some $i$;
  hence, if $(u,v)$ and $(u',v')$ are not adjacent in $G\boxtimes H$, then $f((u,v))\cap f((u',v'))=\emptyset$.
  Therefore, $f$ is a touching representation of a subgraph of $G\boxtimes H$.

  Finally, suppose that $(u,v)$ and $(u',v')$ are adjacent in $G\boxtimes H$.
  Then there exists a point $p_G$ in the intersection of $g(u)$ and $g(u')$,
  since $u=u'$ or $uu'\in E(G)$ and $g$ is a touching representation of $G$;
  and similarly, there exists a point $p_H$ in the intersection of $h(v)$ and $h(v')$.
  Then $p_G\times p_H$ is a point in the intersection of $f((u,v))$ and $f((u',v'))$.
  Hence, $f$ is indeed a touching representation of $G\boxtimes H$.
\end{proof}

\subsection{Taking a subgraph}

The comparable box dimension of a subgraph of a graph $G$ may be larger than $\cbdim(G)$, see the end of this
section for an example. However, we show that the
comparable box dimension of a subgraph is at most exponential in the
comparable box dimension of the whole graph.  This is essentially
Corollary~25 in~\cite{subconvex}, but since the setting is somewhat
different and the construction of~\cite{subconvex} uses rotated boxes,
we provide details of the argument.

\begin{lemma}\label{lemma-subg}
If $G$ is a subgraph of a graph $G'$, then $\cbdim(G)\le \cbdim(G')+\frac12 \chi^2_s(G')$.
\end{lemma}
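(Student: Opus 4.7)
The plan is to fix a touching representation $f$ of $G'$ by comparable boxes in $\mathbb{R}^{d'}$ with $d' = \cbdim(G')$ and a star coloring $c \colon V(G') \to \{1,\ldots,k\}$ with $k = \chi_s(G')$, and then extend $f$ by appending one new dimension $e_{ij}$ for each unordered pair $\{i,j\}$ of distinct colors. Since $\binom{k}{2} \le k^2/2$, this yields the claimed bound. Each new dimension $e_{ij}$ will be used to \emph{break} those edges of $G'$ whose endpoints have colors $\{i,j\}$ and which do not lie in $G$, while keeping every other intersection intact.

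The structural input I would exploit is that $G'[V_i \cup V_j]$ is a star forest, by definition of a star coloring. Pick one center in each star (arbitrarily when the star consists of a single edge); then in dimension $e_{ij}$ assign every vertex $v$ a unit-length interval
\[
h(v)[e_{ij}] = \begin{cases} [-1,0] & \text{if $v$ is a chosen center of a star in $G'[V_i \cup V_j]$,}\\ [1,2] & \text{if $v$ is a leaf of a star with center $w$ and $vw \notin E(G)$,}\\ [0,1] & \text{otherwise.}\end{cases}
\]
In the original dimensions keep $h(v)[\ell] = f(v)[\ell]$.

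Comparability of $h$ is automatic since in each new dimension all intervals have the same length, so $h(u) \sqsubseteq h(v)$ iff $f(u) \sqsubseteq f(v)$; interior-disjointness of any two $h$-boxes is inherited from $f$. The content lies in checking that the adjacencies are right. Among the three used intervals the only disjoint pair is $\{[-1,0],[1,2]\}$, which arises in $e_{ij}$ exactly on center--leaf pairs of $G'[V_i \cup V_j]$ whose edge is not in $G$, that is, exactly on the edges of $G' \setminus G$ with color pair $\{i,j\}$. So every $uv \in E(G') \setminus E(G)$ is broken in some new dimension, and every $uv \notin E(G')$ is already broken by $f$.

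The main thing to verify is that no edge of $G$ gets accidentally broken in some dimension $e_{ij}$. If neither endpoint is in $V_i \cup V_j$, both intervals are $[0,1]$ and intersect. If exactly one endpoint $v$ lies in $V_i \cup V_j$, each of its three candidate intervals meets $[0,1]$. The remaining case is when both endpoints lie in $V_i \cup V_j$: then they are adjacent in the star forest $G'[V_i \cup V_j]$, so they form a center--leaf pair in a single star, and since the edge is in $G$ the leaf's interval is $[0,1]$, which meets the center's $[-1,0]$. Once this case analysis is verified, $h$ is a touching representation of $G$ by comparable boxes in $\mathbb{R}^{d'+\binom{k}{2}}$, and the bound follows.
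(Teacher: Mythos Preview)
Your proof is correct and follows essentially the same approach as the paper: add one unit-interval coordinate for each unordered pair of colors from a star coloring of $G'$, and use the bichromatic star-forest structure to break exactly the edges of $E(G')\setminus E(G)$ while preserving those of $E(G)$. The only difference is cosmetic: the paper separates the two sides in dimension $d_{i,j}$ by color (vertices of color $i$ with a deleted $j$-neighbor versus vertices of color $j$ with a deleted $i$-neighbor, invoking the ``no bichromatic $P_4$'' formulation), whereas you separate by structural role (chosen star centers versus leaves whose incident edge was deleted), which is an equivalent use of the same star-forest property.
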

\begin{proof}
As we can remove the boxes that represent the vertices, we can assume $V(G')=V(G)$.
Let $f$ be a touching representation of $G'$ by comparable boxes in $\mathbb{R}^d$, where $d=\cbdim(G')$.  Let $\varphi$
be a star coloring of $G'$ using colors $\{1,\ldots,c\}$, where $c=\chi_s(G')$.

For any distinct colors $i,j\in\{1,\ldots,c\}$, let $A_{i,j}\subseteq V(G)$ be the set of vertices $u$ of color $i$
such that there exists a vertex $v$ of color $j$ such that $uv\in E(G')\setminus E(G)$.  For each $u\in A_{i,j}$,
let $a_j(u)$ denote such a vertex $v$ chosen arbitrarily.

Let us define a representation $h$ by boxes in $\mathbb{R}^{d+\binom{c}{2}}$ by starting from the representation $f$ and,
for each pair $i<j$ of colors, adding a dimension $d_{i,j}$ and setting
\[h(v)[d_{i,j}]=\begin{cases}
[1/3,4/3]&\text{if $v\in A_{i,j}$}\\
[-4/3,-1/3]&\text{if $v\in A_{j,i}$}\\
[-1/2,1/2]&\text{otherwise.}
\end{cases}\]
Note that the boxes in this extended representation are comparable,
as in the added dimensions, all the boxes have size $1$.

Suppose $uv\in E(G)$, where $\varphi(u)=i$ and $\varphi(v)=j$ and say $i<j$.
We cannot have $u\in A_{i,j}$ and $v\in A_{j,i}$, as then $a_j(u)uva_i(v)$ would be a 4-vertex path in $G'$ in colors $i$ and $j$.
Hence, in any added dimension $d'$, we have $h(u)[d']=[-1/2,1/2]$ or $h(v)[d']=[-1/2,1/2]$,
and thus $h(u)[d']\cap h(v)[d']\neq\emptyset$.  
Since the boxes $f(u)$ and $f(v)$ touch, it follows that the boxes $h(u)$ and $h(v)$ touch as well.

Suppose now that $uv\not\in E(G)$.  If $uv\not\in E(G')$, then $f(u)$ is disjoint from $f(v)$, and thus $h(u)$ is disjoint from
$h(v)$.  Hence, we can assume $uv\in E(G')\setminus E(G)$, $\varphi(u)=i$, $\varphi(v)=j$ and $i<j$.  Then $u\in A_{i,j}$, $v\in A_{j,i}$,
$h(u)[d_{i,j}]=[1/3,4/3]$, $h(v)[d_{j,i}]=[-4/3,-1/3]$, and $h(u)\cap h(v)=\emptyset$.

Consequently, $h$ is a touching representation of $G$ by comparable boxes in dimension $d+\binom{c}{2}\le d+c^2 /2$.
\end{proof}

Let us now combine Lemmas~\ref{lemma-chrom} and \ref{lemma-subg}.

\begin{corollary}\label{cor-subg}
If $G$ is a subgraph of a graph $G'$, then $\cbdim(G)\le \cbdim(G')+2\cdot 81^{\cbdim(G')}\le 3\cdot 81^{\cbdim(G')}$.
\end{corollary}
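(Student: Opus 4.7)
The plan is to simply chain the two preceding lemmas. By Lemma~\ref{lemma-subg}, we have $\cbdim(G)\le \cbdim(G') + \tfrac{1}{2}\chi_s^2(G')$, so all that remains is to substitute the bound on $\chi_s(G')$ supplied by Lemma~\ref{lemma-chrom} and simplify.

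Concretely, I would first apply Lemma~\ref{lemma-chrom} to $G'$ to obtain $\chi_s(G')\le 2\cdot 9^{\cbdim(G')}$. Squaring gives $\chi_s^2(G')\le 4\cdot 81^{\cbdim(G')}$, so $\tfrac{1}{2}\chi_s^2(G')\le 2\cdot 81^{\cbdim(G')}$. Plugging this into the bound of Lemma~\ref{lemma-subg} yields the first inequality $\cbdim(G)\le \cbdim(G') + 2\cdot 81^{\cbdim(G')}$.

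For the second inequality, I would use the trivial bound $d\le 81^d$ valid for every nonnegative integer $d$, applied with $d=\cbdim(G')$. This upgrades the linear term $\cbdim(G')$ to $81^{\cbdim(G')}$ and gives $\cbdim(G')+2\cdot 81^{\cbdim(G')}\le 3\cdot 81^{\cbdim(G')}$, completing the chain.

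There is no real obstacle here; the statement is purely a quantitative consolidation of the two prior lemmas. The only thing to be mildly careful about is ensuring the constant factors in the exponential bound on $\chi_s$ propagate correctly through squaring and halving, which is why the base $9$ in Lemma~\ref{lemma-chrom} becomes the base $81$ appearing in the corollary.
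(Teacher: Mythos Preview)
Your proposal is correct and follows exactly the approach the paper intends: the corollary is stated immediately after the sentence ``Let us now combine Lemmas~\ref{lemma-chrom} and \ref{lemma-subg},'' and your write-up simply spells out that combination with the arithmetic made explicit. The only addition is your observation $d\le 81^d$ to absorb the linear term, which the paper leaves implicit.
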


Let us remark that an exponential increase in the dimension is unavoidable: We have $\cbdim(K_{2^d})=d$,
but the graph obtained from $K_{2^d}$ by deleting a perfect matching has comparable box dimension $2^{d-1}$. Indeed, for every pair $u,v$ of non-adjacent vertices there is a specific dimension $i$ such that their boxes span intervals $[a,b]$ and $[c,d]$ with $b<c$, while for every other box in the representation their $i^\text{th}$ interval contains $[b,c]$.

\subsection{Clique-sums}

A \emph{clique-sum} of two graphs $G_1$ and $G_2$ is obtained from
their disjoint union by identifying vertices of a clique in $G_1$ and
a clique of the same size in $G_2$ and possibly deleting some of the
edges of the resulting clique.  A \emph{full clique-sum} is a
clique-sum in which we keep all the edges of the resulting clique.
The main issue to overcome in obtaining a representation for a (full)
clique-sum is that the representations of $G_1$ and $G_2$ can be
``degenerate''. Consider e.g.\ the case that $G_1$ is represented by
unit squares arranged in a grid; in this case, there is no space to
attach $G_2$ at the cliques formed by four squares intersecting in a
single corner.  This can be avoided by increasing the dimension, but
we need to be careful so that the dimension stays bounded even after
an arbitrary number of clique-sums. We thus introduce the notion of
\emph{clique-sum extendable} representations.

\begin{definition}
Consider a graph $G$ with a distinguished clique $C^\star$, called the
\emph{root clique} of $G$. A touching representation $h$ of $G$
by (not necessarily comparable) boxes in $\mathbb{R}^d$ is called
\emph{$C^\star$-clique-sum extendable} if the following conditions hold for every sufficiently small $\varepsilon>0$.
\begin{itemize}
\item[] {\bf(vertices)} For each $u\in V(C^\star)$, there exists a dimension $d_u$,
  such that:
  \subitem\emph{(v0)} $d_u\neq d_{u'}$ for distinct $u,u'\in V(C^\star)$,
  \subitem\emph{(v1)} each vertex $u\in V(C^\star)$ satisfies $h(u)[d_u] = [-1,0]$ and
    $h(u)[i] = [0,1]$ for any dimension $i\neq d_u$, and
  \subitem\emph{(v2)} each vertex $v\notin V(C^\star)$ satisfies $h(v) \subset [0,1)^d$.
\item[] {\bf(cliques)} For every clique $C$ of $G$, there exists
  a point $p(C)\in [0,1)^d\cap \left( \bigcap_{v\in V(C)} h(v) \right)$
  such that, defining the \emph{clique box} $h^\varepsilon(C)$
  by setting $h^\varepsilon(C)[i] = [p(C)[i],p(C)[i]+\varepsilon]$ for every dimension
  $i$, the following conditions are satisfied:
  \subitem\emph{(c1)} For any two cliques $C_1\neq C_2$, $h^\varepsilon(C_1) \cap
    h^\varepsilon(C_2) = \emptyset$ (equivalently, $p(C_1) \neq p(C_2)$).
  \subitem\emph{(c2)} A box $h(v)$ intersects $h^\varepsilon(C)$ if and only if
    $v\in V(C)$, and in that case their intersection is a facet of
    $h^\varepsilon(C)$ incident to $p(C)$.  That is, there exists a dimension $i_{C,v}$
    such that for each dimension $j$,
    \[h(v)[j]\cap h^\varepsilon(C)[j] = \begin{cases}
    \{p(C)[i_{C,v}] \}&\text{if $j=i_{C,v}$}\\
    [p(C)[j],p(C)[j]+\varepsilon]&\text{otherwise.}
    \end{cases}\]
\end{itemize}
\end{definition}
Note that the root clique can be empty, that is the
empty subgraph with no vertices.  In that case the clique is denoted
$\emptyset$.  Let $\ecbdim(G)$ be the minimum dimension such that $G$
has an $\emptyset$-clique-sum extendable touching representation by
comparable boxes.

Let us remark that a clique-sum extendable representation in dimension $d$ implies
such a representation in higher dimensions as well.
\begin{lemma}\label{lemma-add}
Let $G$ be a graph with a root clique $C^\star$ and let $h$ be
a $C^\star$-clique-sum extendable touching representation of $G$ by comparable boxes in $\mathbb{R}^d$.
Then $G$ has such a representation in $\mathbb{R}^{d'}$ for every $d'\ge d$.
\end{lemma}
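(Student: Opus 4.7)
The plan is to reduce to the case $d' = d+1$ by induction and then construct the extended representation by simply appending one coordinate to every box, distinguishing two types of intervals in the new coordinate depending on whether the vertex lies in the root clique.

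More precisely, I would define a representation $h'$ in $\mathbb{R}^{d+1}$ by copying $h$ in the first $d$ coordinates and setting, in the new coordinate,
\[h'(u)[d+1]=[0,1]\text{ for }u\in V(C^\star),\qquad h'(v)[d+1]=[0,1/2]\text{ for }v\notin V(C^\star),\]
and keep the dimensions $d_u$ unchanged. For each clique $C$, I would set $p'(C)[i]=p(C)[i]$ for $i\le d$ and $p'(C)[d+1]=0$. The verification of (v0), (v1), (v2) is then immediate from the definitions, and (c1) holds because the first $d$ coordinates of $p'(C)$ already distinguished the clique points.

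The substantive checks are comparability, the touching property, and (c2). For comparability, note that a root-clique vertex has a unit hypercube as its box (by (v1)), while every non-root box is strictly contained in $[0,1)^d$ (by (v2)), so in the original representation every non-root box satisfies $h(v)\sqsubseteq h(u)$ for every $u\in V(C^\star)$; extending both by $[0,1/2]\subseteq[0,1]$ and $[0,1]$ respectively in the new coordinate preserves this, and extending two boxes by equal intervals in a new coordinate preserves any existing $\sqsubseteq$-relation among them. For the touching property, in the new coordinate every pair of intervals overlaps in the subinterval $[0,1/2]$ with non-empty interior, so both intersections and interior-disjointness reduce to the original representation in the first $d$ coordinates. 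For (c2), with $v\in V(C)$ and $i_{C,v}\le d$ inherited from $h$, the intersection $h'(v)[d+1]\cap h'^{\varepsilon}(C)[d+1]$ equals $[0,\varepsilon]$ for all $\varepsilon<1/2$, exactly as required; and if $v\notin V(C)$, the existing empty intersection in some dimension $i\le d$ still forces $h'(v)\cap h'^{\varepsilon}(C)=\emptyset$.

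The only step that requires genuine thought is comparability, and the crucial observation that drives it is that (v1)--(v2) force root-clique boxes to be the unique maximal elements of the $\sqsubseteq$-order in the original representation; this is what makes the asymmetric choice of new-coordinate intervals ($[0,1]$ versus $[0,1/2]$) compatible with pairwise comparability. Everything else is bookkeeping, and iterating the $d\mapsto d+1$ step yields the lemma for all $d'\ge d$.
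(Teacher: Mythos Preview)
Your proof is correct and follows essentially the same approach as the paper's: both reduce to $d'=d+1$, extend root-clique boxes by $[0,1]$ and non-root boxes by $[0,1/2]$ in the new coordinate, and rely on the observation that (v1)--(v2) force $h(v')\sqsubseteq h(v)$ for $v'\notin V(C^\star)$ and $v\in V(C^\star)$ to establish comparability. The only difference is that you set $p'(C)[d+1]=0$ while the paper uses $p'(C)[d+1]=1/4$; either choice works, since both lie in $[0,1)$ and in every box's new interval, and in both cases the intersection in coordinate $d+1$ is the full interval $[p'(C)[d+1],p'(C)[d+1]+\varepsilon]$ for sufficiently small $\varepsilon$.
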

\begin{proof}
It clearly suffices to consider the case that $d'=d+1$.
Note that the \textbf{(vertices)} conditions imply that $h(v')\sqsubseteq h(v)$ for every $v'\in V(G)\setminus V(C^\star)$
and $v\in V(C^\star)$.  We extend the representation $h$
by setting $h(v)[d+1] = [0,1]$ for $v\in V(C^\star)$ and $h(v)[d+1] = [0,\frac12]$ for $v\in V(G)\setminus V(C^\star)$.
The clique point $p(C)$ of each clique $C$ is extended by setting $p(C)[d+1] = \frac14$.
It is easy to verify that the resulting representation is $C^\star$-clique-sum extendable.
\end{proof}

The following lemma ensures that clique-sum extendable representations
behave well with respect to full clique-sums.
\begin{lemma}\label{lem-cs}
  Consider two graphs $G_1$ and $G_2$, given with a $C^\star_1$- and a
  $C^\star_2$-clique-sum extendable representations $h_1$ and $h_2$ by comparable boxes
  in $\mathbb{R}^{d_1}$ and $\mathbb{R}^{d_2}$,
  respectively. Let $G$ be the graph obtained by performing a full
  clique-sum of these two graphs on any clique $C_1$ of $G_1$, and on
  the root clique $C^\star_2$ of $G_2$.  Then $G$ admits a $C^\star_1$-clique
  sum extendable representation $h$ by comparable boxes in
  $\mathbb{R}^{\max(d_1,d_2)}$.
\end{lemma}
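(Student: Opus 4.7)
The plan is to rescale and translate $h_2$ so that it fits inside the clique box $h_1^{\varepsilon'}(C_1)$ for a suitably small constant $\varepsilon'>0$, arranging that the ``external'' parts of the root-clique boxes of $h_2$ emerge from $h_1^{\varepsilon'}(C_1)$ through exactly the facets on which each $h_1(v)$, $v\in V(C_1)$, touches it. First I would apply Lemma~\ref{lemma-add} to assume $d_1=d_2=d$. A useful preliminary observation is that in any clique-sum-extendable representation, for any clique $C$ the facet dimensions $i_{C,v}$ (for $v\in V(C)$) are pairwise distinct; otherwise (c2) forces two vertices of $C$ to overlap in a full-dimensional subbox, contradicting disjointness of interiors. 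Using this, I would permute the dimensions of $h_2$ so that for every pair $u\in V(C^\star_2)$, $v\in V(C_1)$ identified by the clique-sum, the outgoing dimension $d_u$ of $u$ equals the facet dimension $i_{C_1,v}$ of $v$ in $h_1$.

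Let $p_1$ and $p_2$ denote the clique-point choices of $h_1$ and $h_2$. Set $h_2'(w)\colonequals \varepsilon'\cdot h_2(w)+p_1(C_1)$ and define
\[
h(v) \colonequals \begin{cases} h_1(v) & \text{if $v\in V(G_1)$,}\\ h_2'(v) & \text{if $v\in V(G_2)\setminus V(C^\star_2)$,}\end{cases}
\]
so that identified vertices inherit their boxes from $h_1$. A short computation using the alignment of dimensions and (c2) applied to $C_1$ in $h_1$ shows that $h_2'(\sigma(v))\subseteq h_1(v)$ for the clique-sum identification $\sigma$; consequently adjacencies crossing the sum are correctly represented, as they reduce to adjacencies of $h_2$. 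Choosing $\varepsilon'$ smaller than the minimum side length of any $h_1$-box makes each $h_2'$-box $\sqsubseteq$ each $h_1$-box, so all boxes are pairwise comparable; and the vertex conditions (v0)--(v2) for root $C^\star_1$ in $h$ are inherited from $h_1$, with (v2) applying also to $h_2'$-boxes since they lie in $h_1^{\varepsilon'}(C_1)\subseteq[0,1)^d$.

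The main obstacle is verifying the clique conditions (c1)--(c2). I would set $p(C)\colonequals p_1(C)$ for every clique $C\subseteq V(G_1)$ with $C\neq V(C_1)$, and $p(C)\colonequals p_1(C_1)+\varepsilon'\cdot p_2(C)$ otherwise (namely for $C=V(C_1)=V(C^\star_2)$ or $C\subseteq V(G_2)$); distinctness (c1) follows by taking $\varepsilon'$ smaller than the minimum distance from $p_1(C_1)$ to any other $p_1(C)$. The delicate case of (c2) concerns a $G_2$-clique $C$ and a vertex $w\in V(C^\star_2)\setminus V(C)$: here $h(w)=h_1(w)$ is a ``large'' box that could a priori meet the new clique box through its facet at $h_1^{\varepsilon'}(C_1)$. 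The key observation is that the disjointness $h_2^\delta(C)\cap h_2(w)=\emptyset$ in $h_2$, together with $h_2(w)[d_w]=[-1,0]$, forces $p_2(C)[d_w]>0$ strictly; after scaling, $p(C)[d_w]>p_1(C_1)[d_w]$, so $h^\varepsilon(C)$ lies strictly above the right endpoint of $h_1(w)[d_w]$ and the intersection is empty. The remaining cases of (c2) reduce to (c2) in $h_1$ or $h_2$ after scaling, or to the finite separation between $p_1(C_1)$ and the other $p_1(C)$'s, which is absorbed by choosing the final $\varepsilon$ sufficiently small.
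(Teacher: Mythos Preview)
Your proposal is correct and follows essentially the same route as the paper: reduce to equal dimensions via Lemma~\ref{lemma-add}, permute coordinates so that the exit directions $d_u$ of $C^\star_2$ match the facet directions $i_{C_1,v}$, then affinely embed $h_2$ into $h_1^{\varepsilon'}(C_1)$ and verify the axioms. Your observation that the $i_{C,v}$ are pairwise distinct is exactly what justifies the permutation step, and your treatment of the ``delicate'' case $w\in V(C^\star_2)\setminus V(C)$ matches the paper's, deducing $p_2(C)[d_w]>0$ from (c2) in $h_2$.

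The only substantive difference is the clique-point assignment for proper subcliques $C\subsetneq V(C_1)$: you keep $p(C)=p_1(C)$, while the paper reassigns these using the $G_2$-formula $p(C)=p_1(C_1)+\varepsilon' p_2(C)$. Both choices work, but note that with your choice the verification of (c2) for such a $C$ against a vertex $v\in V(G_2)\setminus V(C^\star_2)$ does \emph{not} reduce to (c2) of $h_1$ (since $v\notin V(G_1)$) nor of $h_2$ (since $p(C)$ is not placed inside the scaled copy); you need the extra smallness of $\varepsilon'$ you already invoke for (c1), ensuring $p_1(C)\notin h_1^{\varepsilon'}(C_1)$ so that $h^\varepsilon(C)$ and $h_2'(v)\subset h_1^{\varepsilon'}(C_1)$ are separated. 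The paper's choice sidesteps this by making every ``$G_2$-side'' clique point land inside $h_1^{\varepsilon'}(C_1)$ and every remaining $p_1(C)$ lie in some $h_1(u)$ with $u\notin V(C_1)$, hence automatically outside $h_1^{\varepsilon'}(C_1)$. Also, your sentence ``adjacencies crossing the sum \ldots\ reduce to adjacencies of $h_2$'' is fully justified once you note the stronger fact $h_1(v)\cap h_1^{\varepsilon'}(C_1)=h_2'(\sigma(v))\cap h_1^{\varepsilon'}(C_1)$ (both equal the facet at $p_1(C_1)[d_{\sigma(v)}]$), which gives the non-adjacency direction as well; the containment $h_2'(\sigma(v))\subseteq h_1(v)$ alone only yields one implication.
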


\begin{proof}
  By Lemma~\ref{lemma-add}, we can assume that $d_1=d_2$; let $d=d_1$.
  The idea is to translate (allowing also exchanges of dimensions) and
  scale $h_2$ to fit in $h_1^\varepsilon(C_1)$. Consider an $\varepsilon >0$
  sufficiently small so that $h_1^\varepsilon(C_1)$ satisfies all the
  \textbf{(cliques)} conditions, and such that $h_1^\varepsilon(C_1) \sqsubseteq
  h_1(v)$ for any vertex $v\in V(G_1)$.  Let $V(C_1)=\{v_1,\ldots,v_k\}$;
  without loss of generality, we can assume $i_{C_1,v_i}=i$ for $i\in\{1,\ldots,k\}$,
  and thus
  \[h_1(v_i)[j] \cap h_1^\varepsilon(C_1)[j] = \begin{cases}
  \{p_1(C_1)[i]\}&\text{ if $j=i$}\\
  [p_1(C_1)[j],p_1(C_1)[j]+\varepsilon]&\text{ otherwise.}
  \end{cases}\]

  Now let us consider $G_2$ and its representation $h_2$. Here the
  vertices of $C^\star_2$ are also denoted $v_1,\ldots,v_k$, and
  without loss of generality, the \textbf{(vertices)} conditions are
  satisfied by setting $d_{v_i}=i$ for $i\in\{1,\ldots,k\}$

  We are now ready to define $h$.  For $v\in V(G_1)$, we set $h(v)=h_1(v)$.
  We now scale and translate $h_2$ to fit inside $h_1^\varepsilon(C_1)$.
  That is, we fix $\varepsilon>0$ small enough so that
  \begin{itemize}
  \item the conditions \textbf{(cliques)} hold for $h_1$,
  \item $h_1^\varepsilon(C_1)\subset [0,1)^d$, and
  \item $h_1^\varepsilon(C_1)\sqsubseteq h_1(u)$ for every $u\in V(G_1)$,
  \end{itemize}
  and for each $v\in V(G_2) \setminus V(C^\star_2)$,
  we set $h(v)[i]=p_1(C_1)[i] + \varepsilon h_2(v)[i]$ for $i\in\{1,\ldots,d\}$.
  Note that the condition (v2) for $h_2$ implies $h(v)\subset h_1^\varepsilon(C_1)$.
  Each clique $C$ of $H$ is a clique of $G_1$ or $G_2$.
  If $C$ is a clique of $G_2$, we set $p(C)=p_1(C_1)+\varepsilon p_2(C)$,
  otherwise we set $p(C)=p_1(C)$. In particular, for subcliques of $C_1=C^\star_2$,
  we use the former choice.

  Let us now check that $h$ is a $C^\star_1$-clique sum extendable
  representation by comparable boxes. The fact that the boxes are
  comparable follows from the fact that those of $h_1$ and $h_2$
  are comparable and from the scaling of $h_2$:  By construction both
  $h_1(v) \sqsubseteq h_1(u)$ and $h_2(v) \sqsubseteq h_2(u)$ imply
  $h(v) \sqsubseteq h(u)$, and for any vertex $u\in V(G_1)$ and any
  vertex $v\in V(G_2) \setminus V(C^\star_2)$, we have $h(v) \subset h_1^\varepsilon(C_1) \sqsubseteq h(u)$.

  We now check that $h$ is a contact representation of $G$. For $u,v
  \in V(G_1)$ (resp. $u,v \in V(G_2) \setminus V(C^\star_2)$) it
  is clear that $h(u)$ and $h(v)$ have disjoint interiors, and that they
  intersect if and only if $h_1(u)$ and $h_1(v)$ intersect (resp. if
  $h_2(u)$ and $h_2(v)$ intersect). Consider now a vertex $u \in
  V(G_1)$ and a vertex $v \in V(G_2) \setminus V(C^\star_2)$. As
  $h(v)\subset h^\varepsilon(C_1)$, the condition (v2) for $h_1$ implies
  that $h(u)$ and $h(v)$ have disjoint interiors.
  
  Furthermore, if $uv\in E(G)$, then $u\in V(C_1)=V(C^\star_2)$, say $u=v_1$.
  Since $uv\in E(G_2)$, the intervals $h_2(u)[1]$ and $h_2(v)[1]$ intersect,
  and by (v1) and (v2) for $h_2$, we conclude that $h_2(v)[1]=[0,\alpha]$ for some positive $\alpha<1$.
  Therefore, $p_1(C_1)[1]\in h(v)[1]$.  Since $p_1(C_1)\in \bigcap_{x\in V(C_1)} h_1(x)$,
  we have $p_1(C_1)\in h(u)$, and thus $p_1(C_1)[1]\in h(u)[1]\cap h(v)[1]$.
  For $i\in \{2,\ldots,d\}$, note that $i\neq 1=i_{C_1,u}$, and thus
  by (c2) for $h_1$, we have $h_1^\varepsilon(C_1)[i]\subseteq h_1(u)[i]=h(u)[i]$.
  Since $h(v)[i]\subseteq h_1^\varepsilon(C_1)[i]$, it follows that $h(u)$ intersects $h(v)$.

  Finally, let us consider the $C^\star_1$-clique-sum extendability. The \textbf{(vertices)}
  conditions hold, since (v0) and (v1) are inherited from $h_1$, and
  (v2) is inherited from $h_1$ for $v\in V(G_1)\setminus V(C^\star_1)$
  and follows from the fact that $h(v)\subseteq h_1^\varepsilon(C_1)\subset [0,1)^d$
  for $v\in V(G_2)\setminus V(C^\star_2)$.  For the \textbf{(cliques)} condition (c1),
  the mapping $p$ inherits injectivity when restricted to cliques of $G_2$,
  or to cliques of $G_1$ not contained in $C_1$.  For any clique $C$ of $G_2$,
  the point $p(C)$ is contained in $h_1^\varepsilon(C_1)$, since $p_2(C)\in [0,1)^d$.
  On the other hand, if $C'$ is a clique of $G_1$ not contained in $C_1$, then there
  exists $v\in V(C')\setminus V(C_1)$, we have $p(C')=p_1(C')\in h_1(v)$, and
  $h_1(v)\cap h_1^\varepsilon(C_1)=\emptyset$ by (c2) for $h_1$.
  Therefore, the mapping $p$ is injective, and thus for sufficiently small $\varepsilon'>0$,
  we have $h^{\varepsilon'}(C)\cap h^{\varepsilon'}(C')=\emptyset$ for any distinct
  cliques $C$ and $C'$ of $G$.

  The condition (c2) of $h$ is (for sufficiently small $\varepsilon'>0$)
  inherited from the property (c2) of $h_1$ and $h_2$
  when $C$ is a clique of $G_2$ and $v\in V(G_2)\setminus V(C^\star_2)$, or
  when $C$ is a clique of $G_1$ not contained in $C_1$ and $v\in V(G_1)$.
  If $C$ is a clique of $G_1$ not contained in $C_1$ and $v\in V(G_2)\setminus V(C^\star_2)$,
  then by (c1) for $h_1$ we have $h_1^\varepsilon(C)\cap h_1^\varepsilon(C_1)=\emptyset$,
  and since $h^{\varepsilon'}(C)\subseteq h_1^\varepsilon(C)$ and $h(v)\subseteq h_1^\varepsilon(C_1)$,
  we conclude that $h(v)\cap h^{\varepsilon'}(C)=\emptyset$.
  It remains to consider the case that $C$ is a clique of $G_2$ and $v\in V(G_1)$.
  Note that $h^{\varepsilon'}(C)\subseteq h_1^\varepsilon(C_1)$.
  \begin{itemize}
  \item If $v\not\in V(C_1)$, then by the property (c2) of $h_1$, the box $h(v)=h_1(v)$ is disjoint from $h_1^\varepsilon(C_1)$,
  and thus $h(v)\cap h^{\varepsilon'}(C)=\emptyset$.
  \item Otherwise $v\in V(C_1)=V(C^\star_2)$, say $v=v_1$.
  Note that by (v1), we have $h_2(v)=[-1,0]\times [0,1]^{d-1}$.
  \begin{itemize}
  \item If $v\not\in V(C)$, then by the property (c2) of $h_2$, the box $h_2(v)$ is disjoint from $h_2^\varepsilon(C)$.
  Since $h_2^\varepsilon(C)[i]\subseteq[0,1]=h_2(v)[i]$ for $i\in\{2,\ldots,d\}$,
  it follows that $h_2^\varepsilon(C)[1]\subseteq (0,1)$, and thus $h^{\varepsilon'}(C)[1]\subseteq h_1^\varepsilon(C_1)[1]\setminus\{p(C_1)[1]\}$.
  By (c2) for $h_1$, we have $h(v)[1]\cap h_1^\varepsilon(C_1)[1]=h_1(v)[1]\cap h_1^\varepsilon(C_1)[1]=p(C_1)[1]$,
  and thus $h(v)\cap h^{\varepsilon'}(C)=\emptyset$.
  \item If $v\in V(C)$, then by the property (c2) of $h_2$, the intersection of
  $h_2(v)[1]=[-1,0]$ and $h_2^\varepsilon(C)[1]\subseteq [0,1)$ is the single point $p_2(C)[1]=0$,
  and thus $p(C)[1]=p_1(C_1)[1]$ and $h^{\varepsilon'}(C)[1]=[p_1(C_1)[1],p_1(C_1)[1]+\varepsilon']$.
  Recall that the property (c2) of $h_1$ implies $h(v)[1]\cap h_1^\varepsilon(C_1)[1]=\{p(C_1)[1]\}$,
  and thus $h(v)[1]\cap h^{\varepsilon'}(C)[1]=\{p(C)[1]\}$.  For $i\in\{2,\ldots, d\}$,
  the property (c2) of $h_1$ implies $h_1^\varepsilon(C_1)[i]\subseteq h_1(v)[i]=h(v)[i]$, and
  since $h^{\varepsilon'}(C)[i]\subseteq h_1^\varepsilon(C_1)[i]$, it follows that
  $h^{\varepsilon'}(C)[i]\subseteq h(v)[i]$.
  \end{itemize}
  \end{itemize}
\end{proof}

The following lemma enables us to pick the root clique at the expense of increasing
the dimension by $\omega(G)$.
\begin{lemma}\label{lem-apex-cs}
  For any graph $G$ and any clique $C^\star$, the graph $G$ admits a
  $C^\star$-clique-sum extendable touching representation by comparable
  boxes in $\mathbb{R}^d$, for $d = |V(C^\star)| + \ecbdim(G\setminus V(C^\star))$.
\end{lemma}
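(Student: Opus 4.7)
The plan is to start from a $\emptyset$-clique-sum extendable representation $h'$ of $G\setminus V(C^\star)$ by comparable boxes in $\mathbb{R}^{d'}$, where $d' = \ecbdim(G\setminus V(C^\star))$, and extend it to a $C^\star$-clique-sum extendable representation of $G$ by adjoining $k = |V(C^\star)|$ new dimensions, one dedicated to each vertex of $C^\star$. Writing $V(C^\star) = \{u_1, \ldots, u_k\}$, we reserve the first $k$ coordinates of $\mathbb{R}^{k+d'}$ for $C^\star$ with $d_{u_i} = i$, and place $h'$ in the remaining dimensions $k+1, \ldots, k+d'$. The vertices of $C^\star$ will be represented by unit cubes sticking out into the $[-1,0]$-side of their designated dimension, while every other vertex will sit inside $[0,1)^d$, with its adjacency to each $u_i$ recorded in the $i$-th coordinate.

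Concretely, for each $u_i \in V(C^\star)$ let $h(u_i)$ be the unit cube with $h(u_i)[i] = [-1, 0]$ and $h(u_i)[j] = [0, 1]$ for $j \neq i$; this gives (v0) and (v1) for free. For $v \in V(G)\setminus V(C^\star)$ we set $h(v)[k+j] = h'(v)[j]$ for $j = 1, \ldots, d'$, and in each dimension $i \in \{1, \ldots, k\}$ we set $h(v)[i] = [0, 1/2]$ if $vu_i \in E(G)$ and $h(v)[i] = [1/4, 3/4]$ otherwise; (v2) then follows from $h'(v) \subset [0, 1)^{d'}$. The boxes are pairwise comparable: the root boxes are unit cubes (hence pairwise comparable), every other box lies in $[0, 1)^d$ and is therefore $\sqsubseteq$ any root cube, and between two non-root boxes comparability is inherited from $h'$ because their intervals in each of the first $k$ coordinates have the same length $1/2$. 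To check that $h$ is a touching representation: the root cubes pairwise share the origin; $h(u_i)$ meets $h(v)$ for $v \notin V(C^\star)$ iff $0 \in h(v)[i]$, which is precisely when $vu_i \in E(G)$; and between $v, w \in V(G) \setminus V(C^\star)$ the intervals $[0, 1/2]$ and $[1/4, 3/4]$ always pairwise intersect, so touching in $h$ reduces to touching in $h'$. Interior-disjointness is immediate since the relevant one-dimensional interiors $(-1, 0)$ and $(0, 1)$ are disjoint.

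For each clique $C$ of $G$ write $C_1 = C \cap V(C^\star)$ and $C_2 = C \setminus V(C^\star)$, and define $p(C)[i] = 0$ if $u_i \in C_1$, and $p(C)[i] = 1/3$ otherwise, for $i \leq k$; and $p(C)[k+j] = p'(C_2)[j]$ if $C_2 \neq \emptyset$, and $p(C)[k+j] = q_j$ otherwise, where $q$ is a fixed point in $[0, 1)^{d'}$ chosen outside every box $h'(v)$ and distinct from every $p'(C')$. Such a $q$ exists because the closed boxes $h'(v)$ all fit inside $[0, 1-\eta]^{d'}$ for some $\eta > 0$, so any point with each coordinate larger than $1-\eta$ works after a tiny perturbation. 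A direct case check gives $p(C) \in \bigcap_{v \in V(C)} h(v)$, and (c1) follows because the first $k$ coordinates recover $C_1$ (via $0$ versus $1/3$) and the last $d'$ recover $C_2$ (via (c1) for $h'$ and the choice of $q$). For (c2) we take $i_{C, u_i} = i$ for $u_i \in C_1$ and $i_{C, v} = k + i'_{C_2, v}$ for $v \in C_2$, where $i'$ comes from (c2) of $h'$; the required single-point intersections and full-clique-box containments follow for small $\varepsilon$ from the setup and from (c2) of $h'$. The main obstacle is the disjointness part of (c2) when $v \notin V(C)$: for $v \in V(G) \setminus V(C^\star)$ and $C_2 \neq \emptyset$ we inherit a separating dimension from (c2) of $h'$, and for $v \in V(C^\star) \setminus C_1$ the intervals $[-1,0]$ and $[1/3, 1/3+\varepsilon]$ separate in the appropriate dimension, but when $C_2 = \emptyset$ and $v \notin V(C^\star)$ happens to be adjacent to every $u_i \in C_1$, the first $k$ dimensions provide no separation and we must rely precisely on the careful placement of $q$ outside every $h'(v)$ to produce the separating coordinate.
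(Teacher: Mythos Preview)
Your proof is correct and follows essentially the same construction as the paper: adjoin $k$ new coordinates, place the root-clique vertices as prescribed by (v1), and encode adjacency to each $u_i$ via $[0,1/2]$ versus $[1/4,3/4]$ in coordinate $i$.

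One simplification is available. Your separate case analysis for $C_2 = \emptyset$ and the ad hoc point $q$ are unnecessary: since the empty subgraph is explicitly allowed as a clique in the definition of clique-sum extendability, the representation $h'$ already supplies a point $p'(\emptyset)\in[0,1)^{d'}$, and by (c2) applied to the empty clique this point lies outside every $h'(v)$ and is distinct from every other $p'(C')$ by (c1). Hence you may set $p(C)[k+j]=p'(C_2)[j]$ uniformly for all cliques $C$, including those with $C_2=\emptyset$, exactly as the paper does; this removes the need to construct $q$ and the attendant case split.
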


\begin{proof}
  The proof is essentially the same as the one of
  Lemma~\ref{lemma-apex}.  Consider a $\emptyset$-clique-sum
  extendable touching representation $h'$ of $G\setminus V(C^\star)$ by
  comparable boxes in $\mathbb{R}^{d'}$, with $d' = \cbdim(G\setminus
  V(C^\star))$, and let $V(C^\star) = \{v_1,\ldots,v_k\}$. We now construct
  the desired representation $h$ of $G$ as follows. For each vertex
  $v_i\in V(C^\star)$, let $h(v_i)$ be the box in $\mathbb{R}^d$ uniquely determined
  by the condition (v1) with $d_{v_i} = i$. For each vertex $u\in V(G)\setminus V(C^\star)$,
  if $i\le k$ then let $h(u)[i] = [0,1/2]$ if $uv_i \in E(G)$, and $h(u)[i] =
  [1/4,3/4]$ if $uv_i \notin E(G)$. For $i>k$ we have $h(u)[i] =
  \alpha h'(u)[i-k]$, for some $\alpha>0$. The value $\alpha>0$
  is chosen sufficiently small so that $h(u)[i] \subset [0,1)$ whenever $u\notin V(C^\star)$. 
  We proceed similarly for the clique points. For any
  clique $C$ of $G$, if $i\le k$ then let $p(C)[i] = 0$ if $v_i \in V(C)$,
  and $p(C)[i] = 1/4$ if $v_i \notin V(C)$. For $i>k$ we refer to the clique point $p'(C')$ of $C'=C\setminus
  \{v_1,\ldots,v_k\}$, and we set $p(C)[i] = \alpha p'(C')[i-k]$. 
  
  By the construction, it is clear that $h$ is a touching representation of $G$.
  As $h'(u) \sqsubset h'(v)$ implies that $h(u) \sqsubset h(v)$, and as 
  $h(u) \sqsubset h(v_i)$ for every $u\in V(G)\setminus V(C^\star)$ and every 
  $v_i \in V(C^\star)$, we have that $h$ is a representation by comparable boxes.

  For the $C^\star$-clique-sum extendability, the \textbf{(vertices)} conditions hold by the construction.
  For the \textbf{(cliques)} condition (c1), let us consider distinct cliques $C_1$ and $C_2$
  of $G$ such that $|V(C_1)| \ge |V(C_2)|$, and let $C'_i=C_i\setminus V(C^\star)$. If $C'_1 = C'_2$,
  there is a vertex $v_i \in V(C_1) \setminus V(C_2)$, and $p(C_1)[i] = 0 \neq 1/4 = p(C_2)[i]$.
  Otherwise, if $C'_1 \neq C'_2$, then $p'(C'_1) \neq p'(C'_2)$, which implies
  $p(C_1) \neq p(C_2)$ by construction.

  For the \textbf{(cliques)} condition (c2), let us first consider a vertex $v\in V(G)\setminus V(C^\star)$ and
  a clique $C$ of $G$ containing $v$.  In the dimensions $i\in\{1,\ldots,k\}$, we always have
  $h^\varepsilon(C)[i] \subseteq h(v)[i]$. Indeed, if $v_i \in V(C)$, then
  $h^\varepsilon(C)[i] \subseteq [0,1/2] = h(v)[i]$, as in this case $v$ and $v_i$ are adjacent;
  and if $v_i \notin V(C)$, then $h^\varepsilon(C)[i] \subseteq [1/4,1/2] \subseteq h(v)[i]$.
  By the property (c2) of $h'$,
  we have $h^\varepsilon(C)[i] \subseteq h(v)[i]$ for every $i>k$, except one, 
  for which $h^\varepsilon(C)[i] \cap h(v)[i] = \{p(C)[i]\}$. 
  
  Next, let us consider a vertex $v\in V(G)\setminus V(C^\star)$ and a clique $C$ of $G$ not containing $v$. 
  As $v\notin V(C')$, the condition (c2) for $h'$ implies that $p'(C')$ is disjoint from $h'(v)$,
  and thus $p(C)$ is disjoint from $h(v)$.
  
  Finally, we consider a vertex $v_i \in V(C^\star)$.  Note that for any clique $C$ containing $v_i$,
  we have that $h^\varepsilon(C)[i] \cap h(v_i)[i] = [0,\varepsilon]\cap [-1,0] = \{0\}$, and $h^\varepsilon(C)[j] \subseteq [0,1] = h(v_i)[j]$
  for any $j\neq i$. For a clique $C$ that does not contain $v_i$ we have that 
  $h^\varepsilon(C)[i] \cap h(v_i)[i] \subset (0,1)\cap [-1,0] = \emptyset$. 
  Condition (c2) is thus fulfilled and this completes the proof of the lemma. 
\end{proof}

The following lemma provides an upper bound on $\ecbdim(G)$ in terms
of $\cbdim(G)$ and $\chi(G)$.
\begin{lemma}\label{lem-ecbdim-cbdim}
  For any graph $G$, $\ecbdim(G) \le \cbdim(G) + \chi(G)$.
\end{lemma}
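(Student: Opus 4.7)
My plan is to augment a $\cbdim(G)$-dimensional representation of $G$ with $\chi(G)$ additional ``color'' dimensions and, crucially, to thicken every base box slightly on the right so that each clique acquires enough room for the required $\varepsilon$-box. Let $d=\cbdim(G)$ and $c=\chi(G)$, fix a comparable-box touching representation $f$ of $G$ in $\mathbb{R}^d$ scaled so that every $f(v)\subseteq [0,1-\delta)^d$ for a small $\delta>0$ to be fixed at the end, and fix a proper coloring $\varphi\colon V(G)\to\{1,\ldots,c\}$. Writing $f(v)[i]=[a_{v,i},b_{v,i}]$, I define $h$ in $\mathbb{R}^{d+c}$ by
\[h(v)[i]=[a_{v,i},b_{v,i}+\delta]\ \text{for $i\le d$,}\qquad h(v)[d+k]=\begin{cases}[0,1/4]&\text{if $\varphi(v)=k$,}\\ [1/4,1/2]&\text{otherwise.}\end{cases}\]

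The first step is to verify that $h$ is a touching representation of $G$ by comparable boxes in $[0,1)^{d+c}$. Comparability is preserved because the uniform right-extension keeps the ordering of base-interval lengths inherited from $f$, and every color-dimension interval has length $1/4$. I take $\delta$ smaller than the minimum positive distance between $f(u)$ and $f(v)$ over all non-adjacent pairs, which prevents the enlargement from either destroying an existing intersection or creating a spurious one. Interior-disjointness of adjacent $u,v$ follows from the color dimension $d+\varphi(u)$: since $\varphi(u)\neq\varphi(v)$, the intervals $(0,1/4)$ and $(1/4,1/2)$ have disjoint interiors. For non-adjacent same-color pairs, the color dimensions do not separate, but some base dimension does, again by the smallness of $\delta$.

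For the \textbf{(cliques)} condition, let $C$ be a clique whose vertices carry distinct colors $k_1,\ldots,k_m$. I set $p(C)[i]=\max_{v\in V(C)}a_{v,i}$ in each base dimension, $p(C)[d+k]=1/4$ in every color dimension, and assign the facet dimension $i_{C,v}=d+\varphi(v)$; distinctness of the facet dimensions across $V(C)$ follows from the proper-coloring property. Condition (c2) in the color dimensions is immediate: the $\varepsilon$-box $[1/4,1/4+\varepsilon]$ meets $[0,1/4]$ only at $\{1/4\}$ (the facet of the color-$\varphi(v)$ vertex), and lies inside $[1/4,1/2]$ in every other color dimension. The base-dimension check is the heart of the proof and the step I expect to be the main obstacle: the extended clique intersection $\bigcap_{v\in V(C)} h(v)[i]=[\max_v a_{v,i},\,\min_v b_{v,i}+\delta]$ has length at least $\delta$, so for any $\varepsilon\le\delta$ the interval $[p(C)[i],p(C)[i]+\varepsilon]$ fits inside every $h(v)[i]$. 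This is exactly the role of the right-extension: without it, the clique intersection can collapse to a single point (as happens for any $\mathbb{R}^d$-representation of $K_{2^d}$), which would preclude any $\varepsilon$-box. Condition (c1), requiring distinct clique points, is secured by small generic perturbations of $p(C)[i]$ within the $\delta$-thick extended intersection; these perturbations preserve every facet condition and handle a routine bookkeeping detail.
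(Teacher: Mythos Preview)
Your construction has a genuine gap: you never verify the ``only if'' direction of condition (c2), namely that $h(v)\cap h^\varepsilon(C)=\emptyset$ whenever $v\notin V(C)$, and with your choices this fails. Since you set $p(C)[d+k]=1/4$ for \emph{every} color $k$, and $1/4$ is the common endpoint of both intervals $[0,1/4]$ and $[1/4,1/2]$, the $\varepsilon$-box meets $h(v)$ in every color dimension regardless of whether $\varphi(v)\in\varphi(V(C))$. Hence any separation must come from the base coordinates. But your choice $p(C)[i]=\max_{w\in V(C)}a_{w,i}$ does not provide it. Concretely, take the path $u\text{--}v\text{--}w$ in $\mathbb{R}^1$ with $f(u)=[0,1]$, $f(v)=[1,2]$, $f(w)=[2,3]$ and $C=\{v\}$: after your right-extension, $h(u)[1]=[0,1+\delta]$ while $p(C)[1]=a_{v,1}=1$, so $h^\varepsilon(C)[1]=[1,1+\varepsilon]\subseteq h(u)[1]$. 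Since the color coordinates also intersect, $h(u)\cap h^\varepsilon(C)\neq\emptyset$ even though $u\notin V(C)$. The ``small generic perturbations'' you invoke for (c1) do not help: $h(u)$ overlaps $h(v)$ on a base interval of length~$\delta$, so a perturbation small enough to stay inside the extended clique intersection need not leave $h(u)$.

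This is precisely why the paper uses a \emph{three}-interval scheme in each color dimension (intervals of length $2/5$ centered differently according to whether $c(u)$ is less than, equal to, or greater than $i-d$) and places $p(C)[d+k]=1/2$ whenever color $k$ is absent from $C$. Then a vertex $v$ with $\varphi(v)\notin c(C)$ has $h(v)[d+\varphi(v)]=[0,2/5]$ disjoint from $[1/2,1/2+\varepsilon]$, so the color coordinate alone separates. For $v\notin V(C)$ with $\varphi(v)\in c(C)$, the same-colored vertex $v'\in V(C)$ is non-adjacent to $v$, and the paper's choice of $p(C)$ in the \emph{interior} of the (two-sidedly) thickened clique intersection lies in the interior of $h(v')$, hence outside $h(v)$ in the base coordinates. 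Your two-interval scheme cannot make this case split, and without it the base-coordinate argument breaks for neighbours of $V(C)$ whose color is not used in $C$.
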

\begin{proof}
  Let $h$ be a touching representation of $G$ by comparable boxes in
  $\mathbb{R}^d$, with $d=\cbdim(G)$, and let $c$ be a
  $\chi(G)$-coloring of $G$. We start with a slightly modified version
  of $h$. We first scale $h$ to fit in $(0,1)^d$, and for a
  sufficiently small real $\alpha>0$ we increase each box in $h$ by
  $2\alpha$ in every dimension, that is we replace $h(v)[i] = [a,b]$
  by $[a-\alpha,b+\alpha]$ for each vertex $v$ and dimension
  $i$. We choose $\alpha$ sufficiently small so that the boxes representing
  non-adjacent vertices remain disjoint, and thus the resulting representation $h_1$ is
  an intersection representation of the same graph $G$.  Moreover, observe that
  for every clique $C$ of $G$, the intersection $I_C=\bigcap_{v\in V(C)} h_1(v)$ is
  a box with non-zero edge lengths. For any clique $C$ of $G$, let
  $p_1(C)$ be a point in the interior of $I_C$ different from the points
  chosen for all other cliques.

  Now we add $\chi(G)$ dimensions to make the representation touching
  again, and to ensure some space for the clique boxes
  $h^\varepsilon(C)$. Formally we define $h_2$ as follows.
  \[h_2(u)[i]=\begin{cases}
  h_1(u)[i]&\text{ if $i\le d$}\\
  [1/5,3/5]&\text{ if $i>d$ and $c(u) < i-d$}\\
  [0,2/5]&\text{ if $i>d$ and $c(u) = i-d$}\\
  [2/5,4/5]&\text{ otherwise (if $c(u) > i-d > 0$)}
  \end{cases}\]
  For any clique $C$ of $G$, let $c(C)$ denote the color set $\{c(u)\ |\ u\in V(C)\}$.
  We now set
  \[p_2(C)[i]=\begin{cases}
  p_1(C)[i] &\text{ if $i\le d$}\\
  2/5 &\text{ if $i>d$ and $i-d \in c(C)$}\\
  1/2 &\text{ otherwise}
  \end{cases}
  \]
  As $h_2$ is an extension of $h_1$, and as in each dimension $j>d$,
  $h_2(v)[j]$ is an interval of length $2/5$ containing the point $2/5$ for every vertex $v$,
  we have that $h_2$ is an intersection representation of $G$ by comparable boxes.
  To prove that it is touching consider two adjacent 
  vertices $u$ and $v$ such that $c(u)<c(v)$, and let us note that $h_2(u)[d+c(u)] = [0,2/5]$
  and $h_2(v)[d+c(u)] = [2/5,4/5]$. 
  
  For the $\emptyset$-clique-sum extendability, the \textbf{(vertices)} conditions are void.
  For the \textbf{(cliques)} conditions, since $p_1$ is chosen to be injective, the mapping $p_2$
  is injective as well, implying that (c1) holds.
 
  Consider now a clique $C$ in $G$ and a vertex $v\in V(G)$.  If $c(v)\not\in c(C)$, then
  $h_2(v)[c(v)+d]=[0,2/5]$ and $p_2(C)[c(v)+d]=1/2$, implying that $h_2^{\varepsilon}(C)\cap h_2(v)=\emptyset$.
  If $c(v)\in c(C)$ but $v\not\in V(C)$, then letting $v'\in V(C)$ be the vertex of color $c(v)$,
  we have $vv'\not\in E(G)$, and thus $h_1(v)$ is disjoint from $h_1(v')$.  Since $p_1(C)$ is contained
  in the interior of $h_1(v')$, it follows that $h_2^{\varepsilon}(C)\cap h_2(v)=\emptyset$.
  Finally, suppose that $v\in C$.  Since $p_1(C)$ is contained in the interior of $h_1(v)$,
  we have $h_2^{\varepsilon}(C)[i] \subset h_2(v)[i]$ for every $i\le d$.  For $i>d$ distinct from $d+c(v)$,
  we have $p_2^{\varepsilon}(C)[i]\in\{2/5,1/2\}$ and $[2/5,3/5]\subseteq h_2(v)[i]$, and thus
  $h_2^{\varepsilon}(C)[i] \subset h_2(v)[i]$.  For $i=d+c(v)$, we have $p_2^{\varepsilon}(C)[i]=2/5$
  and $h_2(v)[i]=[0,2/5]$, and thus $h_2^{\varepsilon}(C)[i] \cap h_2(v)[i]=\{p_2^{\varepsilon}(C)[i]\}$.
  Therefore, (c2) holds.
\end{proof}

A touching representation of axis-aligned boxes in $\mathbb{R}^d$ is said \emph{fully touching} if any two intersecting boxes intersect on a $(d-1)$-dimensional box. Note that the construction above is fully touching.
Indeed, two intersecting boxes corresponding to vertices $u,v$ of colors $c(u) < c(v)$, only touch at coordinate $2/5$ in the $(d+c(u))^\text{th}$ dimension, while they fully intersect in every other dimension. This observation with Lemma~\ref{lemma-chrom} lead to the following.
\begin{corollary}
\label{cor-fully-touching}
Any graph $G$ has a fully touching representation of comparable axis-aligned boxes in $\mathbb{R}^d$, where $d= \cbdim(G) + 3^{\cbdim(G)}$.
\end{corollary}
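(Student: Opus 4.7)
The plan is to observe that the representation $h_2$ constructed in the proof of Lemma~\ref{lem-ecbdim-cbdim} already gives exactly what we need, modulo a change in notation and a bound on $\chi(G)$ via Lemma~\ref{lemma-chrom}. Let me lay out the steps.

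First, I would take the representation $h_2$ from the proof of Lemma~\ref{lem-ecbdim-cbdim}, which lives in $\mathbb{R}^{\cbdim(G)+\chi(G)}$ and is a touching representation of $G$ by comparable boxes. Then I would verify directly that for any two adjacent vertices $u,v$ with (say) $c(u)<c(v)$, the intersection $h_2(u)\cap h_2(v)$ is a $(\cbdim(G)+\chi(G)-1)$-dimensional box. This splits into two cases by dimension. In dimensions $i\le d=\cbdim(G)$, we have $h_2(u)[i]=h_1(u)[i]$ where $h_1$ was obtained by inflating the touching representation $h$ by $2\alpha$ in every dimension; a short check shows that $h_1(u)[i]\cap h_1(v)[i]$ is an interval of positive length (of length at least $2\alpha$ at any dimension where $h(u)$ and $h(v)$ touched, and equal to one of the two intervals when one box was contained in the other). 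In the added dimensions $i=d+k$ with $k\in\{1,\ldots,\chi(G)\}$, going through the three cases of the definition of $h_2$ gives: if $k<c(u)$ both intervals equal $[2/5,4/5]$; if $k>c(v)$ both equal $[1/5,3/5]$; if $k=c(v)$ we get $[1/5,3/5]\cap[0,2/5]=[1/5,2/5]$; and only at $k=c(u)$ do we get $[0,2/5]\cap[2/5,4/5]=\{2/5\}$, the single degenerate coordinate.

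Next I would substitute the bound $\chi(G)\le 3^{\cbdim(G)}$ from Lemma~\ref{lemma-chrom}. If $\chi(G)$ is strictly smaller than $3^{\cbdim(G)}$, I would pad by appending $3^{\cbdim(G)}-\chi(G)$ trivial dimensions in which every box is the unit interval $[0,1]$; trivial dimensions are shared by all boxes as the full interval $[0,1]$, so they cannot turn any existing full-dimensional facet into a lower-dimensional one, and they preserve comparability since identical intervals are comparable. This produces a fully touching representation by comparable boxes in $\mathbb{R}^{\cbdim(G)+3^{\cbdim(G)}}$, as required.

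I do not expect a real obstacle here; the statement is essentially a bookkeeping corollary of Lemmas~\ref{lem-ecbdim-cbdim} and~\ref{lemma-chrom}. The only point requiring slight care is the verification that the intersection in the first $d$ dimensions is full-dimensional (rather than merely non-empty), which is the one piece of information coming from the $2\alpha$-inflation step inside the proof of Lemma~\ref{lem-ecbdim-cbdim} that is implicit rather than spelled out in the paragraph preceding the corollary.
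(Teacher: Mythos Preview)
Your proposal is correct and follows exactly the paper's approach: the paper states the corollary immediately after the observation that the construction $h_2$ in Lemma~\ref{lem-ecbdim-cbdim} is fully touching (touching only in dimension $d+c(u)$), and then invokes Lemma~\ref{lemma-chrom}. Your case analysis in the added dimensions omits the range $c(u)<k<c(v)$, where $h_2(u)[d+k]=[1/5,3/5]$ and $h_2(v)[d+k]=[2/5,4/5]$ intersect in $[2/5,3/5]$; this is harmless since the intersection has positive length, but you should list it for completeness.
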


Together, the lemmas from this section show that comparable box dimension is almost preserved by
full clique-sums.

\begin{corollary}
\label{cor-csum}
Let $\GG$ be a class of graphs of chromatic number at most $k$.  If $\GG'$ is the class
of graphs obtained from $\GG$ by repeatedly performing full clique-sums, then
\[\cbdim(\GG')\le \cbdim(\GG) + 2k.\]
\end{corollary}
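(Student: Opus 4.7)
The plan is to prove by induction on the number of clique-sums used to build $G\in\GG'$ the stronger statement: every $G\in\GG'$ admits an $\emptyset$-clique-sum extendable touching representation by comparable boxes in $\mathbb{R}^d$ with $d=\cbdim(\GG)+2k$. Since any such representation is, in particular, a touching representation by comparable boxes, the bound $\cbdim(G)\le d$ follows immediately. Strengthening the conclusion this way is essential because Lemma~\ref{lem-cs} only composes representations that are clique-sum extendable, and in order to glue at arbitrary cliques of the partially constructed graph we want its root clique to be $\emptyset$ so that all its subcliques are available as sum-cliques.

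For the base case $G\in\GG$, Lemma~\ref{lem-ecbdim-cbdim} yields $\ecbdim(G)\le\cbdim(G)+\chi(G)\le\cbdim(\GG)+k$, which by Lemma~\ref{lemma-add} lifts to an $\emptyset$-clique-sum extendable representation in dimension exactly $d=\cbdim(\GG)+2k$. For the inductive step, write $G$ as a full clique-sum of some $G_1\in\GG'$ (with strictly fewer pieces in its construction) and some new piece $H\in\GG$, glued along a clique $C\subseteq G_1$ and a clique $C^\star\subseteq H$ of the same size. By induction, $G_1$ has an $\emptyset$-clique-sum extendable representation in $\mathbb{R}^d$, which in particular provides clique boxes at every clique, so $C$ is available as a sum-clique. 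For $H$, apply Lemma~\ref{lem-apex-cs} with root $C^\star$: it gives a $C^\star$-clique-sum extendable representation in dimension $|V(C^\star)|+\ecbdim(H\setminus V(C^\star))$. Now $|V(C^\star)|\le\omega(H)\le\chi(H)\le k$, while $H\setminus V(C^\star)$ is an induced subgraph of $H$, so $\cbdim(H\setminus V(C^\star))\le\cbdim(H)\le\cbdim(\GG)$ and $\chi(H\setminus V(C^\star))\le\chi(H)\le k$. Lemma~\ref{lem-ecbdim-cbdim} therefore gives $\ecbdim(H\setminus V(C^\star))\le\cbdim(\GG)+k$, so the representation of $H$ fits in dimension at most $k+\cbdim(\GG)+k=d$; Lemma~\ref{lemma-add} again pads it to exactly $d$ if needed. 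Finally, Lemma~\ref{lem-cs} combines these two representations into a $\emptyset$-clique-sum extendable representation of $G$ in $\mathbb{R}^{\max(d,d)}=\mathbb{R}^d$, closing the induction.

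There is no serious obstacle once the right inductive statement is chosen; the argument is a bookkeeping exercise showing that the two $k$'s in $2k$ come from two distinct sources that do not compound: one $k$ absorbs the chromatic contribution of Lemma~\ref{lem-ecbdim-cbdim} when converting $\cbdim$ to $\ecbdim$, and the other absorbs the clique-size contribution $|V(C^\star)|$ of Lemma~\ref{lem-apex-cs} when moving the root to the glue-clique. The only care needed is to maintain $\emptyset$-clique-sum extendability of the running graph throughout, which is exactly what Lemma~\ref{lem-cs} delivers (the root of the first summand is preserved), so no dimension is lost across iterations.
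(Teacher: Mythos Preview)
Your proof is correct and follows essentially the same approach as the paper: both apply Lemma~\ref{lem-apex-cs} and Lemma~\ref{lem-ecbdim-cbdim} to each piece $G_i$ (rooted at its glue-clique) to land in dimension $\cbdim(\GG)+2k$, then iterate Lemma~\ref{lem-cs}. Your write-up simply makes explicit the induction and the two separate contributions of $k$ that the paper compresses into the single sentence ``By Lemmas~\ref{lem-ecbdim-cbdim} and~\ref{lem-apex-cs}, $G_i$ has a $C_i^\star$-clique-sum extendable touching representation by comparable boxes in $\mathbb{R}^d$, where $d=\cbdim(\GG)+2k$.''
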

\begin{proof}
Suppose a graph $G$ is obtained from $G_1, \ldots, G_m\in\GG$ by performing full clique-sums.
Without loss of generality, the labelling of the graphs is chosen so that we first
perform the full clique-sum on $G_1$ and $G_2$, then on the resulting graph and $G_3$, and so on.
Let $C^\star_1=\emptyset$ and for $i=2,\ldots,m$, let $C^\star_i$ be the root clique of $G_i$ on which it is
glued in the full clique-sum operation.  By Lemmas~\ref{lem-ecbdim-cbdim} and \ref{lem-apex-cs},
$G_i$ has a $C_i^\star$-clique-sum extendable touching representation by comparable boxes in $\mathbb{R}^d$,
where $d=\cbdim(\GG) + 2k$.  Repeatedly applying Lemma~\ref{lem-cs}, we conclude that
$\cbdim(G)\le d$.
\end{proof}

By Lemmas~\ref{lemma-chrom} and \ref{lemma-subg}, this gives the following bounds.
\begin{corollary}\label{cor-csump}
Let $\GG$ be a class of graphs of comparable box dimension at most $d$.
\begin{itemize}
\item The class $\GG'$ of graphs obtained from $\GG$ by repeatedly performing full clique-sums
has comparable box dimension at most $d + 2\cdot 3^d$.
\item The closure of $\GG'$ by taking subgraphs
has comparable box dimension at most $1250^d$.
\end{itemize}
\end{corollary}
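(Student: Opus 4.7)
The plan is to combine Corollary~\ref{cor-csum} with Lemmas~\ref{lemma-chrom} and~\ref{lemma-subg} (or, equivalently, with Corollary~\ref{cor-subg}).

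For the first bullet, Lemma~\ref{lemma-chrom} gives $\chi(\GG) \le 3^d$, and Corollary~\ref{cor-csum} with $k = 3^d$ immediately yields $\cbdim(\GG') \le d + 2\cdot 3^d$.

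For the second bullet, the naive route---applying Corollary~\ref{cor-subg} to each $H \subseteq G \in \GG'$ using the bound from part~1---gives roughly $3\cdot 81^{d + 2\cdot 3^d}$, much weaker than the stated $1250^d$. The sharper plan is to swap the order of the two closure operations: take subgraphs first and clique-sums second. Let $\GG^{sub}$ be the subgraph closure of $\GG$. By Corollary~\ref{cor-subg}, $\cbdim(\GG^{sub}) \le 3\cdot 81^d$; and since chromatic number is monotone under subgraphs, $\chi(\GG^{sub}) \le \chi(\GG) \le 3^d$. Corollary~\ref{cor-csum} applied to $\GG^{sub}$ with $k = 3^d$ then yields $\cbdim((\GG^{sub})') \le 3\cdot 81^d + 2\cdot 3^d \le 5\cdot 81^d$. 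The final step is to argue that every $H$ in the subgraph closure of $\GG'$ is captured by $(\GG^{sub})'$: given $H \subseteq G$ with $G = G_1 \oplus \cdots \oplus G_m \in \GG'$, set $\tilde G_i = G_i[V(H) \cap V(G_i)] \in \GG^{sub}$ and full-clique-sum these along the intersections of the original shared cliques with $V(H)$, producing a graph $\bar H \in (\GG^{sub})'$ with $V(\bar H) = V(H)$ and $H \subseteq \bar H$. Any missing shared-clique edges of $H$ relative to $\bar H$ are dealt with by one more application of Lemma~\ref{lemma-subg}, and the combined bound simplifies to at most $1250^d$ for $d \ge 1$ (the $d = 0$ case being trivial, since then $\GG' = \{\emptyset\}$).

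The main obstacle is this final reconciliation: subgraphs of full clique-sums are not themselves full clique-sums when some shared-clique edge is missing, so one must either manipulate the clique-sum extendable representation of $\bar H$ produced by Lemma~\ref{lem-cs} to separate the relevant clique vertices (killing the extra edges without substantial increase in dimension), or verify arithmetically that the extra Lemma~\ref{lemma-subg} cost on a graph of dimension at most $5\cdot 81^d$ still fits comfortably under $1250^d$---the slack in the inequality $5\cdot 81^d \le 1250^d$ is exactly what makes such a verification routine.
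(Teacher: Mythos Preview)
Your first bullet is fine and matches the paper exactly.

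For the second bullet, your route diverges from the paper's, and the arithmetic you defer to at the end does not go through. If you apply Lemma~\ref{lemma-subg} (or Corollary~\ref{cor-subg}) to $\bar H$ knowing only that $\cbdim(\bar H)\le 5\cdot 81^d$, the only bound you have on $\chi_s(\bar H)$ is the one coming from Lemma~\ref{lemma-chrom}, namely $\chi_s(\bar H)\le 2\cdot 9^{5\cdot 81^d}$. The resulting bound on $\cbdim(H)$ is then of order $81^{5\cdot 81^d}$, which is doubly exponential in $d$ and vastly exceeds $1250^d$. The ``slack'' in $5\cdot 81^d\le 1250^d$ is irrelevant here, because the subgraph step costs $\tfrac12\chi_s^2$, not something linear in the dimension. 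Your alternative (a), perturbing the clique-sum extendable representation to kill the unwanted clique edges, may be workable in principle (something similar is done for $k$-trees in Corollary~\ref{cor-tw}), but you have not supplied the argument, and in general deleting edges can force the dimension up exponentially.

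What the paper does instead is bound $\chi_s(\GG')$ \emph{directly}, without passing through $\cbdim(\GG')$. The key observation is that acyclic colourings glue across full clique-sums: if each $G_i$ has an acyclic colouring with at most $5^d$ colours (Lemma~\ref{lemma-chrom}), one can permute the colours so that identified clique vertices agree, yielding an acyclic colouring of the clique-sum with the same palette (any two colour classes induce a graph built by clique-sums from forests, hence a forest). Then the Albertson--Chappell--Kierstead--K\"undgen--Ramamurthi bound converts this to $\chi_s(\GG')\le 2\cdot 25^d-5^d$. Plugging this into Lemma~\ref{lemma-subg} together with $\cbdim(\GG')\le d+2\cdot 3^d$ gives at most $d+2\cdot 3^d+\tfrac12(2\cdot 25^d-5^d)^2\le 1250^d$. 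This gluing-of-acyclic-colourings step is the idea your proposal is missing.
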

\begin{proof}
The former bound directly follows from Corollary~\ref{cor-csum} and the bound on the chromatic number
from Lemma~\ref{lemma-chrom}.  For the latter one, we need to bound the star chromatic number of $\GG'$.
Suppose a graph $G$ is obtained from $G_1, \ldots, G_m\in\GG$ by performing full clique-sums.
For $i=1,\ldots, m$, suppose $G_i$ has an acyclic coloring $\varphi_i$ by at most $k$ colors.
Note that the vertices of any clique get pairwise different colors, and thus by permuting the colors,
we can ensure that when we perform the full clique-sum, the vertices that are identified have the same
color.  Hence, we can define a coloring $\varphi$ of $G$ such that for each $i$, the restriction of
$\varphi$ to $V(G_i)$ is equal to $\varphi_i$.  Let $C$ be the union of any two color classes of $\varphi$.
Then for each $i$, $G_i[C\cap V(G_i)]$ is a forest, and since $G[C]$ is obtained from these graphs
by full clique-sums, $G[C]$ is also a forest.  Hence, $\varphi$ is an acyclic coloring of $G$
by at most $k$ colors.  By~\cite{albertson2004coloring}, $G$ has a star coloring by at most $2k^2-k$ colors.
Hence, Lemma~\ref{lemma-chrom} implies that $\GG'$ has star chromatic number at most $2\cdot 25^d - 5^d$.
The bound on the comparable box dimension of subgraphs of graphs from $\GG'$ then follows from Lemma~\ref{lemma-subg}.
\end{proof}

\section{The strong product structure and minor-closed classes}

A \emph{$k$-tree} is any graph obtained by repeated full clique-sums on cliques of size $k$ from cliques of size at most $k+1$.
A \emph{$k$-tree-grid} is a strong product of a $k$-tree and a path.
An \emph{extended $k$-tree-grid} is a graph obtained from a $k$-tree-grid by adding at most $k$ apex vertices.
Dujmovi{\'c} et al.~\cite{DJM+} proved the following result.
\begin{theorem}\label{thm-prod}
Any graph $G$ is a subgraph of the strong product of a $k$-tree-grid and $K_m$, where
\begin{itemize}
\item $k=3$ and $m=3$ if $G$ is planar, and
\item $k=4$ and $m=\max(2g,3)$ if $G$ has Euler genus at most $g$.
\end{itemize}
Moreover, for every $t$, there exists an integer $k$ such that any
$K_t$-minor-free graph $G$ is a subgraph of a graph obtained by repeated clique-sums
from extended $k$-tree-grids.
\end{theorem}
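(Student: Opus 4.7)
\textbf{Proof plan for Theorem~\ref{thm-prod}.}

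The plan is to handle the three parts in increasing order of generality: planar, bounded Euler genus, and general $K_t$-minor-free. Throughout, the workhorse is a layered decomposition coming from breadth-first search (BFS).

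For the planar case, I would start by fixing a BFS layering $V_0,V_1,V_2,\ldots$ rooted at an arbitrary vertex, so that edges of $G$ go between $V_i$ and $V_{i+1}$ or within some $V_i$. The key structural step, in the spirit of the layered partitions of Pilipczuk--Siebertz, is to cover $V(G)$ by vertex-disjoint \emph{vertical} paths $P_1,\dots,P_s$ in a BFS tree, and show that the auxiliary graph $H$ obtained by contracting each $P_i$ to a single vertex has treewidth at most $3$, i.e.\ is (a subgraph of) a $3$-tree. Planarity is the input here: the vertical paths partition the plane graph into strips whose interaction pattern is controlled by a triangulation argument. Since each $P_i$ is a path in the BFS tree, the vertices of $P_i$ have pairwise distinct BFS-levels, so $G$ embeds into $H\boxtimes P\boxtimes K_3$, where $P$ is the path on the BFS levels and the extra factor $K_3$ collects the (at most three) neighbours on a given level that a vertex can have. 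This yields the stated values $k=3$, $m=3$.

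For a graph of Euler genus at most $g$, I would reduce to the planar setting by cutting the surface along $O(g)$ noncontractible simple closed curves to obtain a planar piece, while only duplicating a controlled neighbourhood around each curve. After this cutting, the BFS layering from the planar argument can still be carried out, but each level class may now meet $O(g)$ ``ribbons'' along the cut; this is exactly what forces $k=4$ instead of $3$, and $m=\max(2g,3)$ because each vertex of the planar quotient now represents up to $2g$ vertices around the cuts plus the three from the planar case.

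For the general $K_t$-minor-free case, the plan is to invoke the Robertson--Seymour graph minor structure theorem: for each $t$, there is $g,a,p$ depending only on $t$ such that every $K_t$-minor-free graph can be built by repeated clique-sums from graphs that are $(g,a,p)$-almost-embeddable in some surface of Euler genus at most $g$, with at most $a$ apex vertices and at most $p$ vortices each of depth at most $p$. For each almost-embeddable summand, I would first absorb the vortices by growing the BFS layering so that it cuts through each vortex and then treating the vortex contents as extra ``width'' in the quotient, and I would absorb the apex vertices using the extended tree-grid definition (the ``at most $k$ apex vertices'' slot in an extended $k$-tree-grid). Applying the genus result to the resulting structure produces an extended $k$-tree-grid representation of each summand, and the clique-sum operation is then preserved by definition.

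The hard part is the almost-embeddable step: vortices behave like pieces of path-decomposition glued onto the boundary of a face, and integrating them with the BFS-based vertical partition so that the quotient remains a $k$-tree of bounded width requires a careful ``shift'' of the path-decomposition of the vortex into the layering of the planar/genus skeleton. Getting a uniform $k$ that depends only on $t$ (and not on the particular summand) is exactly the technical core of the Dujmovi\'c et al.\ argument, and would be the main obstacle I would need to handle in full detail before the rest of the plan falls into place.
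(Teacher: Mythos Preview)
The paper does not prove Theorem~\ref{thm-prod}; it is quoted from Dujmovi\'c et al.~\cite{DJM+} and used as a black box, so there is nothing in the paper to compare your plan against. Your outline is, in broad strokes, the strategy of~\cite{DJM+}: a BFS layering, a partition into ``vertical'' pieces whose quotient has bounded treewidth, and the Robertson--Seymour structure theorem for the general minor-free case.

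That said, one concrete step in your planar sketch is wrong as written. The factor $K_3$ does \emph{not} arise because ``a vertex can have at most three neighbours on a given level''; in a BFS layering of a planar graph a vertex may have arbitrarily many neighbours within its own layer. In~\cite{DJM+} the parts of the partition are not single vertical paths but \emph{tripods} (up to three vertical paths meeting at a triangle), and the $K_3$ factor records that each tripod meets each BFS layer in at most three vertices; the quotient obtained by contracting tripods is a minor of a planar triangulation and hence has treewidth at most~$3$. If instead you partition into single vertical BFS paths as you describe, the quotient has larger treewidth (the bound in~\cite{DJM+} along that route is $8$, not $3$), and you do not recover the stated parameters $k=3$, $m=3$. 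Your genus and minor-free sketches inherit this issue and are in any case too loose to justify the specific constants $k=4$ and $m=\max(2g,3)$ in the statement.
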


Let us first bound the comparable box dimension of a graph in terms of
its Euler genus.  As paths and $m$-cliques admit touching
representations with hypercubes of unit size in $\mathbb{R}^{1}$ and
in $\mathbb{R}^{\lceil \log_2 m \rceil}$ respectively, by
Lemma~\ref{lemma-sp} it suffices to bound the comparable box
dimension of $k$-trees.

\begin{theorem}\label{thm-ktree}
  For any $k$-tree $G$,  $\cbdim(G) \le \ecbdim(G) \le k+1$.
\end{theorem}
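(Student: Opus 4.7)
The plan is to prove the stronger statement $\ecbdim(G)\le k+1$ by induction on $n=|V(G)|$; the bound $\cbdim(G)\le\ecbdim(G)$ is immediate from the definitions.

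For the base case $n\le k+1$, the $k$-tree $G$ is a clique $K_n$, and I would construct an $\emptyset$-clique-sum extendable representation in $\mathbb{R}^{k+1}$ directly. For each $i\in\{1,\ldots,n\}$, let $h(v_i)$ be the box with $h(v_i)[i]=[0,1/4]$ and $h(v_i)[j]=[1/4,1/2]$ for every other dimension $j\in\{1,\ldots,k+1\}$. These $n$ boxes are pairwise translates of a common hypercube (hence comparable), they lie inside $[0,1/2]^{k+1}\subset[0,1)^{k+1}$ (giving (v2)), and any two of them meet in a proper lower-dimensional box, so this is a touching representation of $K_n$. The intersection of the boxes indexed by a nonempty $S\subseteq\{1,\ldots,n\}$ takes the single value $1/4$ in every coordinate $i\in S$ and equals $[1/4,1/2]$ in every other coordinate, so for each such clique $C$ I would set $p(C)[i]=1/4$ for $i\in S$ and $p(C)[j]\in(1/4,1/2-\varepsilon)$ for $j\notin S$, choosing distinct values in the free coordinates for distinct cliques to enforce (c1). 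Condition (c2) is then satisfied by taking $i_{C,v_i}\colonequals i$ for each $v_i\in V(C)$, since $1/4$ is precisely the right endpoint of $h(v_i)[i]=[0,1/4]$.

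For the inductive step $n>k+1$, the $k$-tree $G$ has a simplicial vertex $v$ whose neighbourhood $C=N_G(v)$ is a $k$-clique, and $G'\colonequals G-v$ is a smaller $k$-tree. Induction yields an $\emptyset$-clique-sum extendable representation of $G'$ in $\mathbb{R}^{k+1}$. The $(k+1)$-clique $K\colonequals G[V(C)\cup\{v\}]$ admits a $C$-clique-sum extendable representation in $\mathbb{R}^{k+1}$ via Lemma~\ref{lem-apex-cs}: with root $C$, the required dimension is $|V(C)|+\ecbdim(K\setminus V(C))=k+\ecbdim(K_1)\le k+1$, where $\ecbdim(K_1)\le 1$ is witnessed by the representation $h(v)=[0,1/2]$ in $\mathbb{R}$ with $p(\{v\})=1/2$. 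Since $G$ is exactly the full clique-sum of $G'$ and $K$ along $C$, Lemma~\ref{lem-cs} combines these representations into the desired $\emptyset$-clique-sum extendable representation of $G$ in $\mathbb{R}^{k+1}$.

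The main obstacle is the base case: the clique points must simultaneously sit on the ``right facet'' of each incident vertex box (for (c2)) while remaining distinct across all cliques (for (c1)). These constraints can be reconciled because the coordinates of $p(C)$ in $S$ are forced to $1/4$, but its coordinates outside $S$ still range freely over a nontrivial interval, providing ample freedom to enforce distinctness. The inductive step is a clean application of Lemmas~\ref{lem-apex-cs} and~\ref{lem-cs}.
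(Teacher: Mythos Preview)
Your argument is correct and follows essentially the same strategy as the paper: reduce to a clique-sum-extendable representation of a single $(k+1)$-clique in $\mathbb{R}^{k+1}$, then feed the inductive construction through Lemma~\ref{lem-cs}. The only cosmetic difference is that the paper writes down the $C^\star$-rooted representation of $K_{k+1}$ directly (placing the $k$ root vertices according to (v1) and the last vertex at $[0,1/2]^{k+1}$), whereas you first build an $\emptyset$-rooted representation of $K_n$ and then invoke Lemma~\ref{lem-apex-cs} to obtain the $C$-rooted representation of $K_{k+1}$ needed in the inductive step; both routes yield the same bound for the same reason.
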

\begin{proof}
  Let $H$ be a complete graph with $k+1$ vertices and let $C^\star$ be
  a clique of size $k$ in $H$.  By Lemma~\ref{lem-cs}, it suffices
  to show that $H$ has a $C^\star$-clique-sum extendable touching representation
  by hypercubes in $\mathbb{R}^{k+1}$.  Let $V(C^\star)=\{v_1,\ldots,v_k\}$.
  We construct the representation $h$ so that (v1) holds with $d_{v_i}=i$ for each $i$;
  this uniquely determines the hypercubes $h(v_1)$, \ldots, $h(v_k)$.
  For the vertex $v_{k+1} \in V(H)\setminus V(C^\star)$, we set $h(v_{k+1})=[0,1/2]^{k+1}$.
  This ensures that the \textbf{(vertices)} conditions holds.

  For the \textbf{(cliques)} conditions, let us set
the point $p(C)$ for every clique $C$ as follows:
\begin{itemize}
  \item $p(C)[i] = 0 $ for every $i\le k$ such that $v_i\in C$
  \item $p(C)[i] = \frac14 $ for every $i\le k$ such that $v_i\notin C$
  \item $p(C)[k+1] = \frac12 $ if $v_{k+1}\in C$
  \item $p(C)[k+1] = \frac34 $ if $v_{k+1}\notin C$
\end{itemize}
By construction, it is clear that for each vertex $v\in V(H)$, $p(C) \in h(v)$ if and only if
$v\in V(C)$.

For any two distinct cliques $C_1$ and $C_2$, the points $p(C_1)$ and
$p(C_2)$ are distinct.  Indeed, by symmetry we can assume that for some $i$
we have $v_i\in V(C_1)\setminus V(C_2)$, and this implies that $p(C_1)[i] < p(C_2)[i]$.
Hence, the condition (c1) holds.

Consider now a vertex $v_i$ and a clique $C$.  As we observed before, if $v_i\not\in V(C)$,
then $p(C) \not\in h(v_i)$, and thus $h^\varepsilon(C)$ and $h(v_i)$ are disjoint (for sufficiently small $\varepsilon>0$).
If $v_i\in C$, then the definitions ensure that $p(C)[i]$ is equal to the maximum of $h(v_i)[i]$,
and that for $j\neq i$, $p(C)[j]$ is in the interior of $h(v_i)[j]$, implying
$h(v_i)[j] \cap h^\varepsilon(C)[j] = [p(C)[j],p(C)[j]+\varepsilon]$ for sufficiently small $\varepsilon>0$.
\end{proof}
The \emph{treewidth} $\tw(G)$ of a graph $G$ is the minimum $k$ such that $G$ is a subgraph of a $k$-tree.
Note that actually the bound on the comparable box dimension of Theorem~\ref{thm-ktree}
extends to graphs of treewidth at most $k$.
\begin{corollary}\label{cor-tw}
Every graph $G$ satisfies $\cbdim(G)\le\tw(G)+1$.
\end{corollary}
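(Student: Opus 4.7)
The plan is to adapt the construction of Theorem~\ref{thm-ktree} to partial $k$-trees. Since $\tw(G)\le k$, the graph $G$ is a spanning subgraph of some $k$-tree $T$ (obtained, e.g., by triangulating a width-$k$ tree decomposition of $G$). I will run the clique-sum construction underlying Theorem~\ref{thm-ktree} on $T$, but modify each attachment step so that the resulting representation realizes only the edges of $G$; this will keep the ambient dimension at $k+1$.

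At each attachment step, Lemma~\ref{lem-cs} inserts a fresh $K_{k+1}$ with root $k$-clique $C_v$ and extra vertex $v$, placing $h(v)=[0,1/2]^{k+1}$ (in the local coordinates) so that it touches every $h(u)$ with $u\in V(C_v)$ on a facet. For each non-neighbor $u\in V(C_v)\setminus N_G(v)$, with designated dimension $d_u$, I would translate $h(v)$ in that coordinate by a small $\delta>0$, replacing $h(v)[d_u]=[0,1/2]$ by $[\delta,1/2+\delta]$. The translated box is still a hypercube of side $1/2$, so comparability with every other box is preserved; it becomes disjoint from $h(u)$ since $[\delta,1/2+\delta]\cap[-1,0]=\emptyset$; and it still touches every $h(u')$ for $u'\in V(C_v)\cap N_G(v)$, because the coordinate $d_{u'}$ is unchanged and $h(v)$ remains inside $[0,1]=h(u')[\,\cdot\,]$ in every other dimension. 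Clique-sum extendability survives: for any $G$-clique $C\ni v$ one has $V(C)\setminus\{v\}\subseteq N_G(v)$, so the shifted dimensions correspond to vertices outside $V(C)$, and the clique point $p(C)$ can be relocated in those coordinates while still lying inside every $h(w)$ for $w\in V(C)$.

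The remaining case is edges of $T$ inside the initial clique $K$ of $T$'s construction that are absent in $G$ (this can genuinely occur, e.g.\ when $\omega(G)<k+1$). Since $|V(K)|\le k+1$, I would treat this sub-case analogously by starting the clique-sum chain from a smaller base (a single vertex of $G$, or any clique common to $G$ and $T$) and rebuilding $G[V(K)]$ by the same attach-and-shift procedure above, before grafting the rest of $T$'s construction on top. Throughout, the shifts $\delta$ must be chosen hierarchically small relative to the clique-box radii $\varepsilon$ of Lemma~\ref{lem-cs}, so that every translated box stays inside its enclosing clique box and the conditions (c1)--(c2) remain valid after each perturbation. The main obstacle is precisely this bookkeeping, together with verifying that the shifts in the base case do not destroy the clique points needed for later attachments; but no step ever increases the dimension beyond $k+1$, so we obtain $\cbdim(G)\le\tw(G)+1$.
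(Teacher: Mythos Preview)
Your attach-and-shift idea is sound for the recursive step and close in spirit to the paper's argument, but the proposal is left incomplete exactly where you say it is: the base clique. You take $T$ with $V(T)=V(G)$, so non-edges of $G$ inside the initial $(k{+}1)$-clique of $T$ must be handled separately, and ``start the chain from a single vertex and rebuild $G[V(K)]$'' is not a proof---you would have to set up $\emptyset$-clique-sum extendable representations for each intermediate piece and re-verify (c1)--(c2) at every stage.

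The paper sidesteps this entirely with a cleaner trick. Instead of $V(T)=V(G)$, it chooses a $k$-tree $T$ with root clique $C^\star$ so that $G\subseteq T-V(C^\star)$: the root vertices are dummies that never enter $G$. Running the construction of Theorem~\ref{thm-ktree} and Lemma~\ref{lem-cs} on $T$ and then discarding $V(C^\star)$ yields a representation of $T-V(C^\star)$ in $\mathbb{R}^{k+1}$ by hypercubes of \emph{pairwise different sizes}, with the additional structural property that whenever $uv\in E(T-V(C^\star))$ and $h(u)\sqsubseteq h(v)$, the intersection $h(u)\cap h(v)$ is a facet of $h(u)$ incident with its minimum-coordinate corner. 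Now there is no ``base case'' at all, and no need to maintain clique-sum extendability while deleting edges: for each $uv\in E(T)\setminus E(G)$ one simply \emph{shrinks} the smaller cube $h(u)$ slightly away from $h(v)$ in the single dimension of that facet, which breaks only that touching. Because the cubes all had distinct sizes to begin with, the shrunk boxes remain pairwise comparable.

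So your translation-during-attachment route can be made to work, but it buys nothing over the paper's build-then-shrink route and costs you the base-case analysis; the dummy-root observation is the missing idea that makes the whole thing a three-line argument.
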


\begin{proof}
Let $k=\tw(G)$.  Observe that there exists a $k$-tree $T$ with the root clique $C^\star$ such that $G\subseteq T-V(C^\star)$.
Inspection of the proof of Theorem~\ref{thm-ktree} (and Lemma~\ref{lem-cs}) shows that we obtain
a representation $h$ of $T-V(C^\star)$ in $\mathbb{R}^{k+1}$ such that
\begin{itemize}
\item the vertices are represented by hypercubes of pairwise different sizes,
\item if $uv\in E(T-V(C^\star))$ and $h(u)\sqsubseteq h(v)$, then $h(u)\cap h(v)$ is a facet of $h(u)$ incident
with its point with minimum coordinates, and
\end{itemize}
If for some $u,v\in V(G)$, we have $uv\in E(T)\setminus E(G)$, where without loss of generality $h(u)\sqsubseteq h(v)$,
we now alter the representation by shrinking $h(u)$ slightly away from $h(v)$ (so that all other touchings are preserved).
Since the hypercubes of $h$ have pairwise different sizes, the resulting touching representation of $G$ is by comparable boxes.
\end{proof}

As every planar graph $G$ has a touching representation by cubes in
$\mathbb{R}^3$~\cite{felsner2011contact}, we have that $\cbdim(G)\le 3$.
For the graphs with higher Euler genus we can also derive upper
bounds.  Indeed, combining the previous observation on the
representations of paths and $K_m$, with Theorem~\ref{thm-ktree},
Lemma~\ref{lemma-sp}, and Corollary~\ref{cor-subg} we obtain:

\begin{corollary}\label{cor-genus}
For every graph $G$ of Euler genus $g$, there exists a supergraph $G'$
of $G$ such that $\cbdim(G')\le 6+\lceil \log_2 \max(2g,3)\rceil$.
Consequently, \[\cbdim(G)\le 3\cdot 81^7 \cdot \max(2g,3)^{\log_2 81}.\]
\end{corollary}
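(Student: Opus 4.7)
The plan is to combine Theorem~\ref{thm-prod}, Theorem~\ref{thm-ktree}, Lemma~\ref{lemma-sp}, and Corollary~\ref{cor-subg}. Applied to a graph $G$ of Euler genus at most $g$, Theorem~\ref{thm-prod} produces a $4$-tree $T$, a path $P$, and the integer $m=\max(2g,3)$ such that $G$ is a subgraph of $G':=T\boxtimes P\boxtimes K_m$. I would take this $G'$ as the promised supergraph and bound its comparable box dimension factor by factor.

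By Theorem~\ref{thm-ktree} applied with $k=4$, we have $\cbdim(T)\le 5$. The other two factors admit touching representations by \emph{unit} hypercubes: the path $P$ trivially by unit intervals in $\mathbb{R}^{1}$, and $K_m$ by any $m$ of the $2^{\lceil\log_2 m\rceil}$ hypercubes of the form $\prod_{i=1}^{\lceil\log_2 m\rceil}J_i$ with $J_i\in\{[-1,0],[0,1]\}$ (the construction noted after Lemma~\ref{lemma-cliq}). Two successive applications of Lemma~\ref{lemma-sp}, first with the second factor equal to $P$ and then with the second factor equal to $K_m$, yield
\[\cbdim(G')\le\cbdim(T)+1+\lceil\log_2 m\rceil\le 6+\lceil\log_2\max(2g,3)\rceil,\]
establishing the first bound. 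For the second bound, Corollary~\ref{cor-subg} immediately gives $\cbdim(G)\le 3\cdot 81^{\cbdim(G')}$. Substituting the above estimate and simplifying $81^{\lceil\log_2 m\rceil}\le 81\cdot 81^{\log_2 m}=81\cdot m^{\log_2 81}$ yields the claimed $3\cdot 81^7\cdot\max(2g,3)^{\log_2 81}$.

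The argument is a clean assembly of previously established results, with no genuinely hard step. The only point that deserves care is the hypothesis of Lemma~\ref{lemma-sp}: the auxiliary factor must be represented by hypercubes of a common size. That is precisely why the decomposition produced by Theorem~\ref{thm-prod}, which isolates a path and a clique on one side of the strong product, is the right one to use here; the genus-dependence in the final bound arises only through the $K_m$ factor, whose hypercube representation lives in dimension logarithmic in $m$.
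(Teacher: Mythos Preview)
Your proposal is correct and follows essentially the same approach as the paper: the text preceding Corollary~\ref{cor-genus} explicitly invokes the unit-hypercube representations of paths and of $K_m$, Theorem~\ref{thm-ktree}, Lemma~\ref{lemma-sp}, and Corollary~\ref{cor-subg}, which is precisely the chain you assemble. Your write-up even spells out the arithmetic for the second inequality, which the paper leaves implicit.
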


Similarly, we can deal with proper minor-closed classes.
\begin{proof}[Proof of Theorem~\ref{thm-minor}]
Let $\GG$ be a proper minor-closed class.  Since $\GG$ is proper, there exists $t$ such that $K_t\not\in \GG$.
By Theorem~\ref{thm-prod}, there exists $k$ such that every graph in $\GG$ is a subgraph of a graph obtained by repeated clique-sums
from extended $k$-tree-grids.  As we have seen, $k$-tree-grids have comparable box dimension at most $k+2$,
and by Lemma~\ref{lemma-apex}, extended $k$-tree-grids have comparable box dimension at most $2k+2$.
By Corollary~\ref{cor-csump}, it follows that $\cbdim(\GG)\le 1250^{2k+2}$.
\end{proof}

Note that the graph obtained from $K_{2n}$ by deleting a perfect matching has Euler genus $\Theta(n^2)$
and comparable box dimension $n$. It follows that the dependence of the comparable box dimension on the Euler genus cannot be
subpolynomial (though the degree $\log_2 81$ of the polynomial established in Corollary~\ref{cor-genus}
certainly can be improved).  The dependence of the comparable box dimension on the size of the forbidden minor that we
established is not explicit, as Theorem~\ref{thm-prod} is based on the structure theorem of Robertson and Seymour~\cite{robertson2003graph}.
It would be interesting to prove Theorem~\ref{thm-minor} without using the structure theorem.

\section{Fractional treewidth-fragility}

Suppose $G$ is a connected planar graph and $v$ is a vertex of $G$.  For an integer $k\ge 2$,
give each vertex at distance $d$ from $v$ the color $d\bmod k$.  Then deleting the vertices of any of the $k$ colors
results in a graph of treewidth at most $3k$. This fact (which follows from the result of Robertson and Seymour~\cite{rs3}
on treewidth of planar graphs of bounded radius) is (in the modern terms) the basis of Baker's technique~\cite{baker1994approximation}
for design of approximation algorithms.  However, even quite simple graph classes (e.g., strong products of three paths~\cite{gridtw})
do not admit such a coloring (where the removal of any color class results in a graph of bounded treewidth).
However, a fractional version of this coloring concept is still very useful in the design of approximation algorithms~\cite{distapx}
and applies to much more general graph classes, including all graph classes with strongly sublinear separators and bounded maximum degree~\cite{twd}.

We say that a class of graphs $\GG$ is \emph{fractionally treewidth-fragile} if there exists a function $f$ such that
for every graph $G\in\GG$ and integer $k\ge 2$, there exist sets $X_1, \ldots, X_m\subseteq V(G)$ such that
each vertex belongs to at most $m/k$ of them and $\tw(G-X_i)\le f(k)$ for every $i$
(equivalently, there exists a probability distribution on the set $\{X\subseteq V(G):\tw(G-X)\le f(k)\}$
such that $\text{Pr}[v\in X]\le 1/k$ for each $v\in V(G)$).
For example, the class of planar graphs is (fractionally) treewidth-fragile, since we can let $X_i$ consist of the
vertices of color $i-1$ in the coloring described at the beginning of the section.

Before going further, let us recall some notions about treewidth.
  A \emph{tree decomposition} of a graph $G$ is a pair
  $(T,\beta)$, where $T$ is a rooted tree and $\beta:V(T)\to 2^{V(G)}$
  assigns a \emph{bag} to each of its nodes, such that
\begin{itemize}
\item for each $uv\in E(G)$, there exists $x\in V(T)$ such that
  $u,v\in\beta(x)$, and
\item for each $v\in V(G)$, the set $\{x\in V(T):v\in\beta(x)\}$ is
  non-empty and induces a connected subtree of $T$.
\end{itemize}
For nodes $x,y\in V(T)$, we write $x\preceq y$ if $x=y$ or $x$ is a descendant of $y$ in $T$.
The \emph{width} of the tree decomposition is the maximum of the sizes of the bags minus $1$.  The \emph{treewidth} of a graph is the minimum
of the widths of its tree decompositions.  Let us remark that the value of treewidth obtained via this definition coincides
with the one via $k$-trees which we used in the previous section.

Our main result is that all graph classes of bounded comparable box dimension are fractionally treewidth-fragile.
We will show the result in a more general setting, motivated by concepts from~\cite{subconvex} and by applications to related
representations. The argument is motivated by the idea used in the approximation algorithms for disk graphs
by Erlebach et al.~\cite{erlebach2005polynomial}. Before introducing this more general setting, and as a warm-up, let us outline
how to prove that disk graphs of thickness $t$ are fractionally treewidth-fragile. Consider first unit disk graphs. 
By partitionning the plane with a random grid $\HH$, having squared cells of side-length $2k$, any unit disk has probability $1/2k$ 
to intersect a vertical (resp. horizontal) line of the grid. By union bound, any disk has probability at most $1/k$ to intersect 
the grid. Considering this probability distribution, let us now show that removing the disks intersected by the grid leads to a 
unit disk graph of bounded treewidth. Indeed, in such a graph any connected component corresponds to unit disks contained in the 
same cell of the grid. Such cell having area bounded by $4k^2$, there are at most $16tk^2/\pi$ disks contained in a cell. 
The size of the connected components being bounded, so is the treewidth. Note that this distribution also works if we are given
disks whose diameter lie in a certain range. If any diameter $\delta$ is such that $1/c \le \delta \le 1$, then the same process
with a random grid of $2k\times 2k$ cells, ensures that any disk is deleted with probability at most $1/k$, while now the
connected components have size at most $4tc^2k^2/\pi$. Dealing with arbitrary disk graphs (with any diameter $\delta$ being in the range 
$0< \delta \le 1$) requires to delete more disks. This is why each $(2k\times 2k)$-cell is now partitionned in a quadtree-like manner. 
Now a disk with diameter between $\ell /2$ and $\ell$ (with $\ell =1/2^i$ for some integer $i\ge 0$) is deleted if it is not contained 
in a $(2k\ell \times 2k\ell)$-cell of a quadtree. It is not hard to see that a disk is deleted with probability at most $1/k$.
To prove that the remaining graph has bounded treewidth one should consider the following tree decomposition $(T,\beta)$. The 
tree $T$ is obtained by linking the roots of the quadtrees we used (as trees) to a new common root. 
Then for a $(2k\ell \times 2k\ell)$-cell $C$, $\beta(C)$ contains all the disks of diameter at least $\ell/2$ intersecting $C$.
To see that such bag is bounded consider the $((2k+1)\ell \times (2k+1)\ell)$ square $C'$ centered on $C$, and note that any
disk in $\beta(C)$ intersects $C'$ on an area at least $\pi\ell^2/16$. This implies that $|\beta(C)| \le 16t(2k+1)^2 / \pi$.

Let us now give a detailed proof in a more general setting.
For a measurable set $A\subseteq \mathbb{R}^d$, let $\vol(A)$ denote the Lebesgue measure of $A$.
For two measurable subsets $A$ and $B$ of $\mathbb{R}^d$ and a positive integer $s$, we write $A\sqsubseteq_s B$
if for every $x\in B$, there exists a translation $A'$ of $A$ such that $x\in A'$ and $\vol(A'\cap B)\ge \tfrac{1}{s}\vol(A)$.
Note that for two boxes $A$ and $B$, we have $A\sqsubseteq_1 B$ if and only if $A\sqsubseteq B$.
An \emph{$s$-comparable envelope representation} $(\iota,\omega)$ of a graph $G$ in $\mathbb{R}^d$ consists of 
two functions $\iota,\omega:V(G)\to 2^{\mathbb{R}^d}$ such that for some ordering $v_1$, \ldots, $v_n$ of vertices of $G$,
\begin{itemize}
\item for each $i$, $\omega(v_i)$ is a box, $\iota(v_i)$ is a measurable set, and $\iota(v_i)\subseteq \omega(v_i)$,
\item if $i<j$, then $\omega(v_j)\sqsubseteq_s \iota(v_i)$, and
\item if $i<j$ and $v_iv_j\in E(G)$, then $\omega(v_j)\cap \iota(v_i)\neq\emptyset$.
\end{itemize}
We say that the representation has \emph{thickness at most $t$} if for every point $x\in \mathbb{R}^d$, there
exist at most $t$ vertices $v\in V(G)$ such that $x\in\iota(v)$.

For example:
\begin{itemize}
\item If $f$ is a touching representation of $G$ by comparable boxes in $\mathbb{R}^d$, then
$(f,f)$ is a $1$-comparable envelope representation of $G$ in $\mathbb{R}^d$ of thickness at most $2^d$.
\item If $f$ is a touching representation of $G$ by balls in $\mathbb{R}^d$ and letting $\omega(v)$ be
the smallest axis-aligned hypercube containing $f(v)$, then there exists a positive integer $s_d$ depending only on $d$ such that
$(f,\omega)$ is an $s_d$-comparable envelope representation of $G$ in $\mathbb{R}^d$ of thickness at most $2$.
\end{itemize}

\begin{theorem}\label{thm-twfrag}
For positive integers $t$, $s$, and $d$, the class of graphs
with an $s$-comparable envelope representation in $\mathbb{R}^d$ of thickness at most $t$
is fractionally treewidth-fragile, with a function $f(k) = O_{t,s,d}\bigl(k^{d}\bigr)$.
\end{theorem}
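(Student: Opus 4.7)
I plan to follow the blueprint of the disk-graph warm-up, with the central modification that envelopes $\omega(v)$ are axis-aligned boxes of possibly very different aspect ratios; a uniform cubic grid would fail (for instance, $n$ thin rectangles $[0,1]\times[(i-1)/n,i/n]$ all fit in one cube cell). Instead, I use an independently shifted dyadic grid per coordinate direction. After scaling so every side of every $\omega(v)$ has length at most $1$, for each $v$ and $i\in\{1,\ldots,d\}$ let $\ell_{v,i}\ge 0$ be the unique integer with the $i$-th side of $\omega(v)$ in $(2^{-\ell_{v,i}-1}, 2^{-\ell_{v,i}}]$. Sample $\xi_i\in[0,2dk)$ uniformly and independently for each $i$; the dim-$i$ level-$\ell$ grid hyperplanes lie at $\xi_i+j\cdot 2dk\cdot 2^{-\ell}$, $j\in\mathbb{Z}$. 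Let $X$ be the set of vertices $v$ whose dim-$i$ projection of $\omega(v)$ is cut by some level-$\ell_{v,i}$ hyperplane. The per-dimension cutting probability is at most $1/(2dk)$, so $\Pr[v\in X]\le 1/(2k)\le 1/k$ by the union bound.

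For the tree decomposition of $G-X$, each $v\notin X$ sits in a unique composite ``own cell'' $C_v=\prod_i C_{v,i}$ at multi-level $(\ell_{v,1},\ldots,\ell_{v,d})$ containing $\omega(v)$. For any edge $v_iv_j\in E(G-X)$, a point $p\in\omega(v_j)\cap\iota(v_i)$ lies in both $C_{v_i}$ and $C_{v_j}$, forcing these cells to be nested in each coordinate; hence they have a unique componentwise coarsest common supercell $C^\star$ at multi-level $(\min_j(\ell_{v_i,j},\ell_{v_j,j}))_j$ containing $\omega(v_i)\cup\omega(v_j)$. I organize all composite cells into a tree $T$ via a canonical single-dimension refinement rule and define the bag of a cell $C$ at multi-level $(\ell_1,\ldots,\ell_d)$ by $\beta(C)=\{v\notin X:\ell_{v,i}\le\ell_i\text{ for all }i\text{ and }\omega(v)\cap C\ne\emptyset\}$, so that each edge $v_iv_j$ is covered by $\beta(C^\star)$. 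The constraint $\ell_{v,i}\le\ell_i$ forces the $i$-th side $a_i^{(v)}>2^{-\ell_i-1}$, whence $\vol(\omega(v))\ge 2^{-\sum_i\ell_i-d}$.

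To bound $|\beta(C)|$, pick the last vertex $v^\star$ in $\beta(C)$ in the given ordering $v_1,\ldots,v_n$; the $s$-comparable envelope relation $\omega(v^\star)\sqsubseteq_s\iota(v)$ gives $\vol(\iota(v))\ge\vol(\omega(v^\star))/s\ge 2^{-\sum_i\ell_i-d}/s$ for every earlier $v\in\beta(C)$. Combining with a thickness bound $\sum_v\vol(\iota(v)\cap C^+)\le t\vol(C^+)$ for a suitably chosen enlargement $C^+$ of $C$ of volume $O_d(k^d\cdot 2^{-\sum_i\ell_i})$, I obtain $|\beta(C)|=O_{t,s,d}(k^d)$ as required.

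The hardest step is arranging the tree $T$, the bag assignments, and the enlargement $C^+$ simultaneously so that (i) the bags containing any given vertex form a connected subtree of $T$, (ii) each edge is covered by some bag, (iii) each $v\in\beta(C)$ actually contributes a ``near-$C$'' portion of $\iota(v)$ of volume $\Omega_{s,d}(2^{-\sum_i\ell_i-d})$ to $C^+$, and (iv) the bag size bound survives. A naive canonical refinement rule for $T$ may break (i) for vertices whose multi-level coarsenings lie on several tree paths, and condition (iii) for ``large'' vertices whose $\omega(v)$ extends far beyond $C$ will require a careful local use of the $\sqsubseteq_s$ relation to deposit enough volume near $C$.
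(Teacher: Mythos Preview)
Your overall strategy---random shifted multi-scale grids per coordinate, delete vertices whose envelope is cut at their own scale, build a tree decomposition on cells, bound bag sizes by a volume argument using $\sqsubseteq_s$ and the thickness---is exactly the paper's. But the difficulties you list at the end are not merely technical polish; they are the heart of the matter, and your outline does not resolve them. The composite cells at multi-levels $(\ell_1,\ldots,\ell_d)$ ordered by containment form a product of $d$ chains, not a tree (e.g.\ a cell at level $(2,2)$ sits below both $(1,2)$ and $(2,1)$). Any ``canonical single-dimension refinement'' linearising this poset will break bag connectedness: $\omega(v)$ can be cut by hyperplanes at levels strictly finer than $(\ell_{v,1},\ldots,\ell_{v,d})$, so $v$ lands in several incomparable fine bags whose $T$-LCA is coarser than $C_v$ and therefore fails your level test. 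Your edge-covering cell $C^\star$ at level $(\min(\ell_{v_i,j},\ell_{v_j,j}))_j$ is also unsafe: from $\omega(v_j)\sqsubseteq_s\iota(v_i)\subseteq\omega(v_i)$ one only gets $|\omega(v_j)[j']|\le s\,|\omega(v_i)[j']|$, so dyadic levels are monotone only up to $\lceil\log_2 s\rceil$ and the condition $\ell_{v_i,j'}\le\min(\ldots)$ can fail. Finally, your bag uses $\omega(v)\cap C\neq\emptyset$, which does not place any point of $\iota(v)$ near $C$; without that you cannot invoke $\sqsubseteq_s$ \emph{at a point of $\iota(v)\cap C$} to deposit volume into your enlargement $C^+$, and your condition~(iii) is genuinely unprovable as stated.

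The paper's missing idea is to index the grids by the \emph{ordering} rather than by per-vertex dyadic multi-levels. For each $a$ and each $j$ it sets the spacing $\ell_{a,j}\approx ksd\,|\omega(v_a)[j]|$ subject to $\ell_{a-1,j}/\ell_{a,j}\in\mathbb{Z}^+$; this single divisibility constraint forces $\HH^1\subseteq\HH^2\subseteq\cdots\subseteq\HH^n$, so the cells form a laminar family and the tree is automatic. The bag of an $a$-cell $C$ is then $\{v_i\notin X: i\le a,\ \iota(v_i)\cap C\neq\emptyset\}$---note $\iota$, not $\omega$---which makes connectedness a one-line check and feeds the volume argument directly: for each $v_i$ in the bag, pick $x\in\iota(v_i)\cap C$ and apply $\omega(v_a)\sqsubseteq_s\iota(v_i)$ at $x$ to get a translate of $\omega(v_a)$ inside an $O(k)$-enlargement of $C$ carrying at least $\vol(\omega(v_a))/s$ of $\iota(v_i)$. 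Replacing your per-vertex multi-level cells with this ordering-indexed nested grid is the one structural change that makes everything you sketched go through.
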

\begin{proof}
For a positive integer $k$, let $f(k)=(2ksd+2)^dst$.
Let $(\iota,\omega)$ be an $s$-comparable envelope representation of a graph $G$
in $\mathbb{R}^d$ of thickness at most $t$, and let $v_1$, \ldots, $v_n$ be the corresponding ordering of the vertices of $G$.
Let us define $\ell_{i,j}\in \mathbb{R}^+$ for $i=1,\ldots, n$ and $j\in\{1,\ldots,d\}$ as an approximation of $ksd|\omega(v_i)[j]|$ such that $\ell_{i-1,j} / \ell_{i,j}$ is a positive integer. Formally
it is defined as follows.
\begin{itemize}
\item Let $\ell_{1,j}=ksd|\omega(v_1)[j]|$.
\item For $i=2,\ldots, n$, let $\ell_{i,j} = \ell_{i-1,j}$, if
  $\ell_{i-1,j} < ksd|\omega(v_i)[j]|$, and otherwise let
  $\ell_{i,j}$ be lowest fraction of $\ell_{i-1,j}$ that is
  greater than $ksd|\omega(v_i)[j]|$, formally $\ell_{i,j} =
  \min\{\ell_{i-1,j}/b \ |\ b\in
  \mathbb{N}^+ \text{ and } \ell_{i-1,j}/b \ge ksd|\omega(v_i)[j]|\}$.
\end{itemize}
Let the real $x_j\in [0,\ell_{1,j}]$ be chosen uniformly at random,
and let $\HH^i_j$ be the set of hyperplanes in $\mathbb{R}^d$
consisting of the points whose $j$-th coordinate is equal to
$x_j+m\ell_{i,j}$ for some $m\in\mathbb{Z}$. As $\ell_{i,j}$ is a
multiple of $\ell_{i',j}$ whenever $i\le i'$, we have that $\HH^i_j
\subseteq \HH^{i'}_j$ whenever $i\le i'$.  For $i\in\{1,\ldots,n\}$,
the \emph{$i$-grid} is $\HH^i=\bigcup_{j=1}^d \HH^i_j$, and we let the
$0$-grid $\HH^0=\emptyset$.  Similarly as above we have that $\HH^i
\subseteq \HH^{i'}$ whenever $i\le i'$.

We let $X\subseteq V(G)$ consist of the vertices $v_a\in V(G)$ such
that the box $\omega(v_a)$ intersects some hyperplane $H\in \HH^a$,
that is such that $x_j+m\ell_{a,j}\in \omega(v_a)[j]$, for some
$j\in\{1,\ldots,d\}$ and some $m\in \mathbb{Z}$.  First, let us argue
that $\text{Pr}[v_a\in X]\le 1/k$.  Indeed, the set $[0,\ell_{1,j}]\cap \bigcup_{m\in\mathbb{Z}} (\omega(v_a)[j]-m\ell_{a,j})$
has measure $\tfrac{\ell_{1,j}}{\ell_{a,j}}\cdot |\omega(v_a)[j]|$, implying that for fixed $j$, this happens with probability
$|\omega(v_a)[j]|/\ell_{a,j}$.  Let $a'$ be the largest integer such
that $a'\le a$ and $\ell_{a',j} < \ell_{a'-1,j}$ if such an index exists,
and $a'=1$ otherwise; note that $\ell_{a,j}=\ell_{a',j}\ge ksd|\omega(v_{a'})[j]|$.  Moreover, since
$\omega(v_a)\sqsubseteq_s\iota(v_{a'})\subseteq \omega(v_{a'})$, we have $\omega(v_a)[j]\le s\omega(v_{a'})[j]$.
Combining these inequalities,
\[\frac{|\omega(v_a)[j]|}{\ell_{a,j}}\le \frac{s\omega(v_{a'})[j]}{ksd|\omega(v_{a'})[j]|}=\frac{1}{kd}.\]
By the union bound, we conclude that $\text{Pr}[v_a\in X]\le 1/k$.

Let us now bound the treewidth of $G-X$.  
For $a\ge 0$, an \emph{$a$-cell} is a maximal connected subset of $\mathbb{R}^d\setminus \bigl(\bigcup_{H\in \HH^a} H\bigr)$.
A set $C\subseteq\mathbb{R}^d$ is a \emph{cell} if it is an $a$-cell for some $a\ge 0$.
A cell $C$ is \emph{non-empty} if there exists $v\in V(G-X)$ such that $\iota(v)\subseteq C$.
Note that there exists a rooted tree $T$ whose vertices are
the non-empty cells and such that for $x,y\in V(T)$, we have $x\preceq y$ if and only if $x\subseteq y$.
For each non-empty cell $C$, let us define $\beta(C)$ as the set of vertices $v_i\in V(G-X)$ such that
$\iota(v)\cap C\neq\emptyset$ and $C$ is an $a$-cell for some $a\ge i$.

Let us show that $(T,\beta)$ is a tree decomposition of $G-X$.  For each $v_j\in V(G-X)$, the $j$-grid is disjoint from $\omega(v_j)$,
and thus $\iota(v_j)\subseteq \omega(v_j)\subset C$ for some $j$-cell $C\in V(T)$ and $v_j\in \beta(C)$.  Consider now an edge $v_iv_j\in E(G-X)$, where $i<j$.
We have $\omega(v_j)\cap \iota(v_i)\neq\emptyset$, and thus $\iota(v_i)\cap C\neq\emptyset$ and $v_i\in \beta(C)$.
Finally, suppose that $v_j\in C'$ for some $C'\in V(T)$.  Then $C'$ is an $a$-cell for some $a\ge j$, and since
$\iota(v_j)\cap C'\neq\emptyset$ and $\iota(v_j)\subset C$, we conclude that $C'\subseteq C$, and consequently $C'\preceq C$.
Moreover, any cell $C''$ such that $C'\preceq C''\preceq C$ (and thus $C'\subseteq C''\subseteq C$) is an $a'$-cell
for some $a'\ge j$ and $\iota(v_j)\cap C''\supseteq \iota(v_j)\cap C'\neq\emptyset$, implying $v_j\in\beta(C'')$.
It follows that $\{C':v_j\in\beta(C')\}$ induces a connected subtree of $T$.

Finally, let us bound the width of the decomposition $(T,\beta)$.  Let $C$ be a non-empty cell and let $a$ be maximum such that $C$
is an $a$-cell.  Then $C$ is an open box with sides of lengths $\ell_{a,1}$, \ldots, $\ell_{a,d}$.  Consider $j\in\{1,\ldots,d\}$:
\begin{itemize}
\item If $a=1$, then $\ell_{a,j}=ksd |\omega(v_a)[j]|$.
\item If $a>1$ and $\ell_{a,j}=\ell_{a-1,j}$, then $\ell_{a,j}=\ell_{a-1,j}<2ksd|\omega(v_a)[j]|$ (otherwise $\ell_{a,j}=\ell_{a-1,j}/b$ for some integer $b\ge 2$).
\item If $a>1$ and $\ell_{a,j} < \ell_{a-1,j}$, then $\ell_{a-1,j}\ge b\times ksd|\omega(v_a)[j]|$ for some integer $b\ge 2$. Now let $b$ be the greatest such integer (that is such that $\ell_{a-1,j} < (b+1)\times ksd|\omega(v_a)[j]|$) and note that
\[\ell_{a,j}=\frac{\ell_{a-1,j}}{b}<\tfrac{b+1}{b}ksd|\omega(v_a)[j]|<\tfrac{3}{2}ksd|\omega(v_a)[j]|.\]
\end{itemize}
Hence, $\ell_{a,j}<2ksd |\omega(v_a)[j]|$.  Let $C'$ be the box with the same center as $C$ and with $|C'[j]|=(2ksd+2)|\omega(v_a)[j]|$.
For any $v_i\in \beta(C)\setminus\{v_a\}$, we have $i\le a$ and $\iota(v_i)\cap C\neq\emptyset$, and since $\omega(v_a)\sqsubseteq_s \iota(v_i)$,
there exists a translation $B_i$ of $\omega(v_a)$ that intersects $C\cap \iota(v_i)$ and such that $\vol(B_i\cap\iota(v_i))\ge \tfrac{1}{s}\vol(\omega(v_a))$.
Note that as $B_i$ intersects $C$, we have that $B_i\subseteq C'$.
Since the representation has thickness at most $t$,
\begin{align*}
\vol(C')&\ge \vol\left(C'\cap \bigcup_{v_i\in \beta(C)\setminus\{v_a\}} \iota(v_i)\right)\\
&\ge \vol\left(\bigcup_{v_i\in \beta(C)\setminus\{v_a\}} B_i\cap\iota(v_i)\right)\\
&\ge \frac{1}{t}\sum_{v_i\in \beta(C)\setminus\{v_a\}} \vol(B_i\cap\iota(v_i))\\
&\ge \frac{\vol(\omega(v_a))(|\beta(C)|-1)}{st}.
\end{align*}
Since $\vol(C')=(2ksd+2)^d\vol(\omega(v_a))$, it follows that
\[|\beta(C)|-1\le (2ksd+2)^dst=f(k),\]
as required.
\end{proof}

The proof that (generalizations of) graphs with bounded comparable box dimensions have sublinear separators in~\cite{subconvex}
is indirect; it is established that these graphs have polynomial coloring numbers, which in turn implies they have polynomial
expansion, which then gives sublinear separators using the algorithm of Plotkin, Rao, and Smith~\cite{plotkin}.
The existence of sublinear separators is known to follow more directly from fractional treewidth-fragility.  Indeed, since $\text{Pr}[v\in X]\le 1/k$,
there exists $X\subseteq V(G)$ such that $\tw(G-X)\le f(k)$ and $|X|\le |V(G)|/k$. The graph $G-X$ has a balanced separator of size
at most $\tw(G-X)+1$, which combines with $X$ to a balanced separator of size at most $V(G)|/k+f(k)+1$ in $G$.
Optimizing the value of $k$ (choosing it so that $V(G)|/k=f(k)$), we obtain the following corollary of Theorem~\ref{thm-twfrag}.

\begin{corollary}
For positive integers $t$, $s$, and $d$, every graph $G$
with an $s$-comparable envelope representation in $\mathbb{R}^d$ of thickness at most $t$
has a sublinear separator of size $O_{t,s,d}\bigl(|V(G)|^{\tfrac{d}{d+1}}\bigr).$
\end{corollary}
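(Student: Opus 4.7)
The plan is to convert the fractional treewidth-fragility supplied by Theorem~\ref{thm-twfrag} directly into a balanced separator, following exactly the recipe sketched in the paragraph immediately preceding the corollary. First, I would apply Theorem~\ref{thm-twfrag} to $G$ with a parameter $k$ still to be optimized, obtaining sets $X_1,\ldots,X_m\subseteq V(G)$ such that $\tw(G-X_i)\le f(k)$ for every $i$ with $f(k)=O_{t,s,d}(k^d)$, and each vertex lies in at most $m/k$ of these sets.

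A standard double-counting argument then gives
\[\sum_{i=1}^{m}|X_i|=\sum_{v\in V(G)}|\{i:v\in X_i\}|\le \frac{m\,|V(G)|}{k},\]
so by averaging there exists an index $i$ with $|X_i|\le|V(G)|/k$; fix such an $i$ and write $X=X_i$. Since any graph of treewidth $w$ admits a balanced separator of size at most $w+1$ (take a centroid bag of an optimal tree decomposition), the graph $G-X$ has a balanced separator $S$ with $|S|\le f(k)+1$. The union $X\cup S$ is then a balanced separator of $G$ of size at most
\[\frac{|V(G)|}{k}+f(k)+1=O_{t,s,d}\!\left(\frac{|V(G)|}{k}+k^{d}\right).\]

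Finally, I would optimize $k$ by balancing the two terms. Setting $|V(G)|/k\approx k^{d}$ gives the choice $k=\bigl\lceil|V(G)|^{1/(d+1)}\bigr\rceil$, for which both summands are of order $|V(G)|^{d/(d+1)}$, yielding the claimed bound $O_{t,s,d}\bigl(|V(G)|^{d/(d+1)}\bigr)$. There is no real obstacle here: the corollary is a soft consequence of Theorem~\ref{thm-twfrag} together with the classical treewidth-to-separator fact, and the only minor care needed is the usual rounding of $k$ to an integer and handling the trivial case when $|V(G)|$ is smaller than a constant depending on $t,s,d$, both absorbed by the $O$-notation.
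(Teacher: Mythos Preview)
Your proposal is correct and matches the paper's own argument essentially verbatim: the paper likewise picks $X$ with $|X|\le |V(G)|/k$ by averaging, adds a balanced separator of $G-X$ of size at most $\tw(G-X)+1\le f(k)+1$, and then balances $|V(G)|/k$ against $f(k)=O_{t,s,d}(k^d)$ to get the bound $O_{t,s,d}\bigl(|V(G)|^{d/(d+1)}\bigr)$.
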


\section{Acknowledgement}
This research was carried out at the workshop on Geometric Graphs and Hypergraphs organized by Yelena Yuditsky and Torsten Ueckerdt in September 2021.  We would like to thank the organizers and all participants for creating a friendly and productive environment.

\bibliography{main}

\begin{thebibliography}{10}

\bibitem{albertson2004coloring}
Michael~O Albertson, Glenn~G Chappell, Hal~A Kierstead, Andr{\'e} K{\"u}ndgen,
  and Radhika Ramamurthi.
\newblock Coloring with no $2 $-colored $ p\_4 $'s.
\newblock {\em the electronic journal of combinatorics}, pages R26--R26, 2004.

\bibitem{baker1994approximation}
B.S. Baker.
\newblock Approximation algorithms for {NP}-complete problems on planar graphs.
\newblock {\em Journal of the ACM (JACM)}, 41(1):153--180, 1994.

\bibitem{gridtw}
E.~Berger, Z.~Dvor{\'a}k, and S.~Norin.
\newblock Treewidth of grid subsets.
\newblock {\em Combinatorica}, 2017.

\bibitem{DJM+}
V.~Dujmovi{\'c}, G.~Joret, P.~Micek, P.~Morin, T.~Ueckerdt, and D.~R. Wood.
\newblock Planar graphs have bounded queue-number.
\newblock {\em Journal of the ACM}, 67:22, 2020.

\bibitem{twd}
Z.~Dvor{\'a}k.
\newblock Sublinear separators, fragility and subexponential expansion.
\newblock {\em European Journal of Combinatorics}, 52:103--119, 2016.

\bibitem{subconvex}
Z.~Dvor{\'a}k, R.~McCarty, and S.~Norin.
\newblock Sublinear separators in intersection graphs of convex shapes.
\newblock {\em arXiv}, 2001.01552, 2020.

\bibitem{logapx}
Zdenek Dvor{\'{a}}k.
\newblock Approximation metatheorem for fractionally treewidth-fragile graphs.
\newblock {\em arXiv}, 2103.08698, 2021.

\bibitem{distapx}
Zdenek Dvor{\'{a}}k and Abhiruk Lahiri.
\newblock Approximation schemes for bounded distance problems on fractionally
  treewidth-fragile graphs.
\newblock In {\em 29th Annual European Symposium on Algorithms, {ESA} 2021,
  September 6-8, 2021, Lisbon, Portugal (Virtual Conference)}, volume 204 of
  {\em LIPIcs}, pages 40:1--40:10. Schloss Dagstuhl - Leibniz-Zentrum f{\"{u}}r
  Informatik, 2021.

\bibitem{wcolig}
Zdenek Dvor{\'a}k, Jakub Pek{\'a}rek, Torsten Ueckerdt, and Yelena
  Yuditsky.
\newblock Weak coloring numbers of intersection graphs.
\newblock {\em arXiv}, 2103.17094, 2021.

\bibitem{erlebach2005polynomial}
Thomas Erlebach, Klaus Jansen, and Eike Seidel.
\newblock Polynomial-time approximation schemes for geometric intersection
  graphs.
\newblock {\em SIAM Journal on Computing}, 34:1302--1323, 2005.

\bibitem{felsner2011contact}
Stefan Felsner and Mathew~C Francis.
\newblock Contact representations of planar graphs with cubes.
\newblock In {\em Proceedings of the twenty-seventh annual symposium on
  Computational geometry}, pages 315--320, 2011.

\bibitem{covcol}
Martin Grohe, Stephan Kreutzer, Roman Rabinovich, Sebastian Siebertz, and
  Konstantinos Stavropoulos.
\newblock Coloring and covering nowhere dense graphs.
\newblock {\em SIAM Journal on Discrete Mathematics}, 32:2467--2481, 2018.

\bibitem{koebe}
P.~Koebe.
\newblock Kontaktprobleme der {K}onformen {A}bbildung.
\newblock {\em Math.-Phys. Kl.}, 88:141--164, 1936.

\bibitem{graphsandgeom}
L.~Lov\'asz.
\newblock {\em Graphs and Geometry}.
\newblock American Mathematical Society, Providence, 2019.

\bibitem{nesbook}
J.~Nesetril and P.~{Ossona de Mendez}.
\newblock {\em Sparsity (Graphs, Structures, and Algorithms)}, volume~28 of
  {\em Algorithms and Combinatorics}.
\newblock Springer, 2012.

\bibitem{plotkin}
Serge Plotkin, Satish Rao, and Warren~D Smith.
\newblock Shallow excluded minors and improved graph decompositions.
\newblock In {\em Proceedings of the fifth annual ACM-SIAM symposium on
  Discrete algorithms}, pages 462--470. Society for Industrial and Applied
  Mathematics, 1994.

\bibitem{robertson2003graph}
N.~Robertson and P.~D. Seymour.
\newblock {Graph {M}inors. {XVI}. {E}xcluding a non-planar graph}.
\newblock {\em J. Combin. Theory, Ser.~B}, 89(1):43--76, 2003.

\bibitem{rs3}
Neil Robertson and Paul~D. Seymour.
\newblock Graph {M}inors. {III}. {P}lanar tree-width.
\newblock {\em Journal of Combinatorial Theory, Series~B}, 36:49--64, 1984.

\bibitem{sachs94}
Horst Sachs.
\newblock Coin graphs, polyhedra, and conformal mapping.
\newblock {\em Discrete Mathematics}, 134:133--138, 1994.

\end{thebibliography}





\end{document}